\documentclass{amsproc}

\usepackage[T2A]{fontenc}
\usepackage[cp1251]{inputenc}
\usepackage[english]{babel}

\usepackage{amsfonts}
\usepackage{amsthm}
\usepackage{amssymb}
\usepackage{amsmath}
\usepackage{amsopn}
\usepackage{bm}
\usepackage{array}
\usepackage{color}

\newtheorem{theo}{Theorem}

\theoremstyle{definition}
\newtheorem*{defin}{Definition}
\newtheorem{exam}{Example}
\newtheorem{rem}{Remark}

\usepackage{cite}
\allowdisplaybreaks[4]

\newcommand{\tens}[1]{\textsf{#1}}

\DeclareMathOperator{\Complex}{\mathbb{C}}
\DeclareMathOperator{\Natural}{\mathbb{N}}

\DeclareMathOperator{\sn}{sn}
\DeclareMathOperator{\cn}{cn}
\DeclareMathOperator{\dn}{dn}
\DeclareMathOperator{\wgt}{wgt}
\DeclareMathOperator{\res}{res}
\newcommand{\Jac}{\mathfrak{J}}
\newcommand{\rmd}{\mathrm{d}}
\newcommand{\rmi}{\mathrm{i}}
\newcommand{\mFr}{\mathfrak{m}}
\newcommand{\rFr}{\mathfrak{r}}
\newcommand{\wFr}{\mathfrak{w}}
\newcommand{\M}{\mathcal{M}}
\newcommand*{\qede}{\null\nobreak\hfill\ensuremath{\square}}

\title[Solution of the Jacobi inversion problem]{Solution of the Jacobi inversion problem \\
on non-hyperelliptic curves}
\author{Julia Bernatska}
\address{University of Connecticut, Department of Mathematics}
\email{jbernatska@gmail.com}

\author{Dmitry Leykin}
\address{Retired}
\email{dmitry.leykin@gmail.com }


\begin{document}
 
\maketitle

\begin{abstract}
In this paper we propose a method of solving the Jacobi inversion problem in terms of multiply periodic $\wp$ functions,
also called Kleinian $\wp$ functions. 
This result is based on the recently developed theory
of multivariable sigma functions for $(n,s)$-curves. Considering $(n,s)$-curves as  
canonical representatives in the corresponding classes of bi-rationally equivalent
plane algebraic curves, we claim that the Jacobi inversion problem on
plane algebraic curves is solved completely. 
Explicit solutions on trigonal, tetragonal and pentagonal curves are given as an illustration.
\end{abstract}

\section{Introduction}
The Jacobi inversion problem, that is finding a preimage of Abel's map,
plays a central role in the theory of Riemann surfaces, 
and finds important applications in the solution of integrable systems, see for example \cite{bbeim1994}.

Solving the Jacobi inversion problem on elliptic curves led to integration of 
the Euler top and the Lagrange top, \cite{audin}. A solution to the Euler top was accomplished in terms of ratios of
theta functions, called later the Jacobi elliptic functions $\sn$, $\cn$, $\dn$.
A solution to the Lagrange top was given in terms of the Weierstrass elliptic functions $\wp$ and $\wp'$.

In the Weierstrass approach, elliptic functions are derived from an entire function
called the sigma function $\sigma$ and defined by a series which satisfies a heat equation, 
see \cite[Eqs (7.)--(9.) Art.\;5]{weier}.
The concept of the sigma function was extended to hyperelliptic curves  by Klein in \cite{klein}.
However, no progress was made in constructing sigma functions 
in higher genera until the end of the 20-th century.

An explicit solution of the Jacobi inversion problem on
hyperelliptic curves was known in the end of 19th century, and can be found in \cite[Art.\;216]{bakerAF}.
A solution in genus $2$, provided that the curve has a canonical form: $y^2 = \lambda_0 + \lambda_1 x 
+ \lambda_2 x^2 + \lambda_3 x^3 + \lambda_4 x^4 + 4 x^5$,
is discussed in detail in \cite[Art.\;11]{bakerMPF}. 
This solution is given in terms of 
multiply periodic $\wp$ functions obtained from the entire function $\vartheta(u)=e^{au^2} \theta(v)$,
where $v=(v_1,v_2)$ and $u=(u_1,u_2)$ are integrals of the first kind normalized and not normalized, respectively,
 $au^2$ denotes a quadratic function of $u$ with a $2\times 2$ symmetric matrix $a$,
 and $\theta$ denotes the Riemann theta function.
In fact, $\vartheta$ served as the sigma function related to the canonical form
of the class of genus two hyperelliptic curves, and renamed $\sigma$ in \cite[Art.\;12]{bakerMPF}. 
However, a series representation of  hyperelliptic sigma function (in any genus) was not known at that time.

This solution was rediscovered in 1990's, see  \cite{belHKF} and some further publications.
At the same time a development of the theory of multivariable sigma functions started.
First, the sigma function was defined through the Riemann theta function, as shown above.
In the beginning of 21st century a great progress was made in constructing series representations 
for multivariable sigma functions on Jacobian varieties of canonical plane algebraic curves, 
see \cite{bel1999, bl2002, bl2004, bl2008}  (or \cite{belMDSF} in whole).
Such a sigma function is defined by a system of heat equations. 
The system has a unique solution if a curve has a canonical form called
an $(n,s)$-curve with co-prime $n$ and $s$. Introduced in 
\cite{bel1999}, $(n,s)$-curves serve as a generalization of the Weierstrass canonical form of elliptic curves.

The method of constructing series of multivariable sigma functions
is illustrated in \cite{egoy}, where the systems of heat equations and the series 
for the $(2,5)$, $(2,7)$ and $(3,4)$ curves are constructed explicitly. 
Series expansions of the sigma functions for
the $(3,4)$ and $(3,5)$ curves can be found in \cite{bl2018}.

Sigma functions for space curves are constructed in \cite{mats,KMP1,KMP2}.

A uniformization of Jacobian varieties of trigonal $(n,s)$-curves,
which is, in fact, a solution of the Jacobi inversion problem, is proposed in \cite{bel00}.
The method based on the Klein formula \cite[Eq.\;(4.1)]{bel00}, which connects
the fundamental bi-differential to a quadratic function in differentials of the first kind
with $\wp$ functions as coefficients, was used.  Below we suggest an alternative method of 
solving the Jacobi inversion problem, which avoids computing the fundamental bi-differential.

Another approach to solving 
the Jacobi inversion problem uses ratios of theta functions
instead of multiply periodic $\wp$ functions. The two approaches are 
closely related, see  \cite[Art.\;212--216]{bakerAF}. So, 
ratios of theta functions can be obtained from the solution proposed in this paper.
An idea how to modify a theta function in the case of 
the extended Jacobi inversion problem\footnote{A solution of the Jacobi inversion problem 
is supposed to be a divisor of degree equal to the genus of a curve. If
a divisor of degree greater than the genus of a curve is required, then  
the problem of inverting Abel's map is called
the extended Jacobi inversion problem.}
is proposed in \cite{bf2008} as a generalization of  \cite{fed}.

Below we develop the ideas of \cite{bakerMPF} and \cite{belHKF},
and suggest a solution of  the Jacobi inversion problem on non-hyperelliptic curves
in terms of multiply periodic $\wp$ functions, also called Kleinian $\wp$ functions.
This solution provides a non-special divisor of the degree equal to the genus of a curve. 
The case of special divisors is a subject of another investigation. 
The extended Jacobi inversion problem can be solved by degenerating the sigma function of a
curve of a higher genus, see \cite{bl2019} for the case of finding a degree $2$ divisor on an elliptic curve.

The paper is organized as follows. In Preliminaries we briefly recall the notions of an $(n,s)$-curve
and the Sat\={o} weight, how to construct entire rational functions
as well as differentials of the first and the second kinds on such a curve,
and the known solutions of the Jacobi inversion problem.
In section 3 we give a solution of the Jacobi inversion problem on a non-hyperelliptic curve in general,
and display how to adapt the proposed method to hyperelliptic curves. In section 4
an explicit solution of the Jacobi inversion problem on trigonal curves is given. 
Similar solution on tetragonal and pentagonal curves are presented in sections 5 and 6, respectively.

\section{Preliminaries}
\subsection{$(n,s)$-Curves}
In \cite{bel1999}, with a pair of fixed co-prime integers $n$ and $s$, $n<s$,  a family of curves is defined:
\begin{subequations}\label{nsCurve}
\begin{equation}
\mathcal{V}_{(n,s)}=\{(x,y)\in \Complex^2 \mid f(x,y;\lambda) =0\},
\end{equation} 
where 
\begin{gather}
 f(x,y;\lambda) = -y^n + x^s + \sum_{j=0}^{n-2} \sum_{i=0}^{s-2}  \lambda_{ns-in- js} y^j x^i, \label{fEq}\\
\lambda_{k\leqslant 0}=0, \quad \lambda_{k}\in \Complex. \label{ModCond}
\end{gather}
\end{subequations}
The parameters $\lambda\equiv (\lambda_k)$ of such a family are varying.
The condition \eqref{ModCond} guarantees that
the genus of a curve from  $\mathcal{V}_{(n,s)}$ does not exceed
\begin{equation}\label{Vgenus}
g = \tfrac{1}{2} (n-1)(s-1). 
\end{equation}
Every family $\mathcal{V}_{(n,s)}$ with fixed $n$ and $s$ is considered as a fibre bundle over
the space of parameters $\Lambda = \{\lambda \in \Complex^{2g-M}\} \simeq \Complex^{2g-M}$. 
Here $M$ is called the modality, and
denotes the number of parameters $\lambda_{k\leqslant 0}$ assigned to zero. 

Let $ \mathcal{V}_{(n,s)}^0 \,{\subset}\, \mathcal{V}_{(n,s)}$ be the submanifold of
degenerate curves with genera less than~$g$.
In what follows, we always consider curves from 
$\mathcal{V}_{(n,s)}^c = \mathcal{V}_{(n,s)} \backslash \mathcal{V}_{(n,s)}^0$. 
That is all branch points of a curve 
are distinct, and the genus of a curve equals~$g$ defined by \eqref{Vgenus}. Any degenerate curve can be 
factorized into a non-degenerate curve, and a number of planes,
see for example \cite{bl2019}. The non-degenerate curve obtained from such a factorization belongs to
another family $\mathcal{V}$. Then the Jacobi inversion problem on $\mathcal{V}$
 can be considered instead. 

All $(n,s)$-curves have the property that infinity is a single point which is a branch point where all $n$ sheets join together.
This point serves as the base point. In the vicinity of infinity on the curve \eqref{nsCurve} 
the following expansions in a local parameter~$\xi$  hold
\begin{gather}\label{nsParam}
x = \xi^{-n},\qquad y = \xi^{-s}(1+O(\lambda)).
\end{gather}
The negative exponent of the leading term in the expansion about infinity serves as the Sat\={o} weight.
Thus, the Sat\={o} weights of $x$  and $y$ are $\wgt x = n$, $\wgt y=s$.
The weights are assigned to parameters of the curve:  $\wgt \lambda_k = k$.
Note, that $f$ is homogeneous with respect to the Sat\={o} weight, and $\wgt f= ns$.

The Sat\={o} weight introduces an order in the set of monomials $y^jx^i$, that is $\wgt y^jx^i = js+in$.
Note that, the Sat\={o} weight shows the order of the pole at infinity of a monomial.
We use the ordered list of monomials $\mathfrak{M}$ as a characteristic of an $(n,s)$-curve. Actually,
$\mathfrak{M} = \{\mathcal{M}_{ js+in - 2g+1} =  y^jx^i \mid i \geqslant 0,\, 0 \leqslant j \leqslant n-1\}$.

\subsection{Abel's map}
Let $\mathfrak{W}_{(n,s)} = \{\mathfrak{w}_1,\, \mathfrak{w}_2,\,\dots,\, \mathfrak{w}_g\}$ be the 
Weierstrass gap sequence of $\mathcal{V}^c_{(n,s)}$ of genus $g$, namely 
$$\mathfrak{W}_{(n,s)} = (\{0\} \cup \Natural) \backslash \{js+in \mid i,\, j \geqslant 0\}.$$
Note, that $(n,s)$-curves cover all plane algebraic curves with Weierstrass gap sequences
generated by two co-prime numbers. Every class of bi-rationally equivalent plane algebraic curves 
characterized by a fixed Weierstrass gap sequence $\mathfrak{W}_{(n,s)}$ has a representative 
$(n,s)$-curve. 

Let $\rmd u \equiv (\rmd u_{\mathfrak{w}_1},\,\rmd u_{\mathfrak{w}_2},\,\dots,\,\rmd u_{\mathfrak{w}_g})^t$ denote
differentials of the first kind on the curve. Actually,
\begin{gather}\label{uDif}
\rmd u_{\mathfrak{w}_i} = \frac{\mathcal{M}_{-\wFr_i} \,\rmd x}{\partial_y f(x,y)},\quad   1 \leqslant i \leqslant g,
\end{gather}
where $\mathfrak{w}_i$ runs the Weierstrass  gap sequence, and $\wgt \rmd u_{\mathfrak{w}_i} = - \mathfrak{w}_i$.
Note that the first $g$ monomials $\{\mathcal{M}_{-\mathfrak{w}_g}$, \ldots, $\mathcal{M}_{-\mathfrak{w}_2}$, 
$\mathcal{M}_{-\mathfrak{w}_1}\}$ from the list $\mathfrak{M}$ are employed.

Let $\rmd r \equiv (\rmd r_{\mathfrak{w}_1}$, $\rmd r_{\mathfrak{w}_2}$, \ldots, $\rmd r_{\mathfrak{w}_g})^t$ denote
differentials of the the second kind,
$\wgt \rmd r_{\mathfrak{w}} = \mathfrak{w}$. Let
\begin{gather}\label{rDif}
\rmd r_{\wFr_i} = \bigg(\wFr_i \mathcal{M}_{\wFr_i}  
+ \sum_{-2g < \kappa < \wFr_i} d_\kappa (\lambda) \mathcal{M}_\kappa \bigg) 
\frac{\rmd x}{\partial_y f(x,y)},\quad   i =1,\, \dots,\,g,
\end{gather}
and coefficients $d_\kappa$ are polynomials in $\lambda$.
The following condition completely determines the principle parts of $\rmd r$:
\begin{gather}\label{rCond}
\res_{\xi = 0} \bigg( \int_0^\xi \rmd u (\xi) \bigg) \, \rmd r(\xi)^t = 1_g,
\end{gather}
where $\rmd u (\xi)$ and  $\rmd r (\xi)$ denote expansions near infinity of differentials of the first and second kind,
and $1_g$ denotes the identity matrix of size $g$. The holomorphic parts of $\rmd r$ are not essential in the
further computations. In what follows, we deal with $\rmd r_\ell$ 
of weights $1\leqslant \ell \leqslant n-1$, and of the form \eqref{rDif} with
$0 < \kappa < \wFr_i$.

Let $P$ be a point of a fixed curve from  $\mathcal{V}^c_{(n,s)}$, then
\begin{gather*}
u(P) = \int_{\infty}^P \rmd u, \qquad\qquad    r(P) = \int_{\infty}^P \rmd r
\end{gather*}
are integrals of the first and second kinds on the curve, respectively.
In fact,  $u$ serves as Abel's map on the curve, that is $u(P) \equiv \mathcal{A}(P)$.
Abel's map of a divisor $D= \sum_{i=1}^n P_i$ is defined by  $\mathcal{A}(D) = \sum_{i=1}^n \mathcal{A}(P_i)$. 
The definition of $r$ requires a regularization since $\rmd r$ 
have a singularity at infinity, for more details see \cite{bl2018}. The integrals $r$ of the second kind 
define zeta functions $\zeta$ on the curve, up to adding abelian functions.

Let $\{\mathfrak{a}_n, \mathfrak{b}_n \mid n =1,\,\dots,\, g\}$ be canonical cycles on the curve,
that is they form a symplectic basis of the first homology group $\mathcal{H}_1(\mathcal{V}^c_{(n,s)})$.  Let 
$\omega = (\omega_{in})$ $\omega' = (\omega'_{in})$ denote period matrices of the first kind:
\begin{gather*}
\omega_{in} = \int_{\mathfrak{a}_n} \rmd u_{\mathfrak{w}_i}, \qquad\qquad    
\omega'_{in} = \int_{\mathfrak{b}_n} \rmd u_{\mathfrak{w}_i}.
\end{gather*}
If $\mathfrak{P}$ denotes the lattice of periods generated by the columns of $(\omega,\omega')$,
then $\Jac = \Complex^g / \mathfrak{P}$ 
is a Jacobian variety (Jacobian) of the curve under consideration. 
We denote coordinates of $\Jac$, and of the space $\Complex^g$ where $\Jac$ is embedded,
 by $u=(u_{\mathfrak{w}_1},\,u_{\mathfrak{w}_2},\, 
\dots,\, u_{\mathfrak{w}_g})^t$, $\wgt u_{\mathfrak{w}}=-\mathfrak{w}$.

\subsection{Sigma function}
Let $\sigma_{(n,s)}(u;\lambda)$ be an entire function called the sigma function 
of a family $\mathcal{V}^c_{(n,s)}$ of curves. We employ the definition given in \cite{bl2004}.
Let $\Lambda$ be the space of parameters of $\mathcal{V}^c_{(n,s)}$, 
and  $\mathcal{U}$ denote the space of Jacobians over $\Lambda$, called 
 the universal space.

\begin{defin} 
A multivariable sigma function $\sigma_{(n,s)}: \Complex^g\times \Lambda \mapsto \Complex$ 
is defined by a system of heat equations $Q_k \sigma_{(n,s)} =0$ and the initial condition given by
the Weierstrass-Schur polynomial corresponding to the Weierstrass gap sequence of $\mathcal{V}^c_{(n,s)}$.
 The annihilation operators $Q_k$, $k=1$, \ldots, $\dim \Lambda$, 
 form a Lie algebra on the universal space $\mathcal{U}$ of $\mathcal{V}^c_{(n,s)}$, obtained as a lift of the Lie algebra of 
 vector fields $L_k$ on~$\Lambda$. 
\end{defin}
Vector fields  $L_k$ are differential operators of the first order with respect to  $\lambda\in \Lambda$.
They serve as a frame on the first de Ram cohomology group $\mathcal{H}^1(\mathcal{V}_{(n,s)})$,  and
produce the Gauss-Manin connection $\Gamma_k$: 
$$L_k R = \Gamma_k R + \rmd \varphi_k(x,y,\lambda),\quad R \in \mathcal{H}^1.$$
By a linear transformation cohomologies $R$ are made symplectic, that is 
the corresponding period matrix $\Omega$ satisfies the Legendre relation: $\Omega^t J \Omega = 2\pi \imath J$.
The corresponding gauge transformation of the Gauss-Manin connection leads 
to $\widetilde{\Gamma}_k \in \mathfrak{sp}(2g)$, which produces
differential operators $H_k$ of the second order  with respect to $u\in \Complex^g$:
\begin{equation*}
H_k = 
\left\langle \widetilde{\Gamma}_k J \small \begin{pmatrix} \partial_{u} \\ u \end{pmatrix},
\begin{pmatrix} \partial_{u} \\ u \end{pmatrix} \right\rangle.
\end{equation*}
Then the annihilation operators $Q_k$ are obtained by
\begin{equation}\label{AnnihOp}
Q_k = \frac{1}{2} H_{k} + L_k +  \frac{1}{8}  L_k \big(\log \det V(\lambda)\big),
\end{equation}
where $V(\lambda)$ is the coefficient matrix of the vector fields $L_k$, that is
 $L=V(\lambda) \partial_\lambda$ in the matrix form.

A system of heat equations with the annihilation operators \eqref{AnnihOp}
produces a series  representation of $\sigma_{(n,s)}$. 
Such a technique is illustrated in more detail in \cite{egoy}.
The series of $\sigma_{(n,s)}$ is analytic
in $u$ and $\lambda$, and (see \cite{bel1999})
 $$\wgt \sigma_{(n,s)} = - \tfrac{1}{24} (n^2-1)(s^2-1).$$

Zeta functions and abelian functions on the Jacobian of $\mathcal{V}_{(n,s)}$
 are defined with the help of its multivariable sigma function $\sigma$:
 \begin{subequations}\label{wpDefs}
\begin{align}
 &\zeta_{i}(u) =  \frac{\partial}{\partial u_i} \log \sigma(u), \label{zetaDef} \\
 &\wp_{i,j}(u) = - \frac{\partial^2}{\partial u_i \partial u_j} \log \sigma(u), \label{wp2Def}\\
  &\wp_{i,j,k}(u) = - \frac{\partial^3}{\partial u_i \partial u_j \partial u_k} \log \sigma(u), \label{wp3Def} \quad \text{etc.}
\end{align}
\end{subequations}
For brevity, we write $\sigma(u)$, $\zeta_i(u)$, $\wp_{i,j}(u)$, etc. instead 
of $\sigma(u;\lambda)$,  $\zeta_i(u;\lambda)$, $\wp_{i,j}(u;\lambda)$, etc.
The abelian functions are periodic over the lattice $\mathfrak{P}$ defined above. 

\subsection{The Jacobi inversion problem}\label{ss:JIP}
Let $\mathcal{S}^g$ be a symmetric product of $g$ copies of the same curve $V$ of genus $g$,
and $\mathcal{D}_g \subset \mathcal{S}^g$ consists of all degree $g$ non-special divisors.
Abel's map establishes a one-to-one correspondence between $\mathcal{D}_g$ and 
$\Jac_g \equiv \mathcal{A}(\mathcal{D}_g) \subset \Jac$.
Note that $\sigma(u)\neq 0$ if $u \in \Jac_g$, and
$\sigma(u)=0$ if $u \in \Jac_0 \equiv \Jac \backslash \Jac_g$
by the Riemann vanishing theorem.

\newtheorem*{JIProblem}{The Jacobi inversion problem}
\begin{JIProblem}
Given $u \in \Jac_g$ such that $\sigma(u)$  does not vanish,
find a non-special degree $g$ divisor $D_g$ such that $\mathcal{A}(D_g) = u$.
\end{JIProblem}

The Jacobi inversion problem in this formulation has a unique solution. 
Every divisor $D_g \in \mathcal{D}_g$ serves as a representative of a class of equivalent divisors
$\{D \mid \deg D \geqslant g,\, \mathcal{A}(D) = \mathcal{A}(D_g)\}$. There is only one divisor $D_g$ of degree $g$
in each class, and it is called a reduced divisor. We leave the Jacobi inversion problem for points from $\Jac_0 $ 
beyond our consideration in this paper.

Now we recall the known solutions of the declared Jacobi inversion problem.

\begin{exam}
A uniformization of the Weierstrass canonical curve ${-}y^2 + 4x^3 - g_2 x - g_3 =0$
is given by $(x,y)= \big(\wp(u;g_2,g_3), - \wp' (u;g_2,g_3)\big)$ with the standard Weierstrass
$\wp$-function and $\sigma$-function.

The Weierstrass canonical curve is equivalent to $\mathcal{V}_{(2,3)}$ of the form 
$-y^2 + x^3 + \lambda_4 x + \lambda_6 = 0$, and the corresponding sigma function $\sigma_{(2,3)}$ relates
 to the Weierstrass $\sigma$-function as
$\sigma_{(2,3)}(u; \lambda_4,\lambda_6) = \sigma (u; -4 \lambda_4, -4 \lambda_6)$. 
Then $\wp_{(2,3)}(u; \lambda_4,\lambda_6) = 2 \wp(u; -4 \lambda_4, -4 \lambda_6)$,
and a uniformization of $\mathcal{V}_{(2,3)}$ is given by $(x,y)= \big(\wp_{(2,3)}(u; \lambda)$, 
$- \tfrac{1}{2} \wp'_{(2,3)} (u; \lambda)\big)$.
\end{exam}

\begin{exam}
In \cite[Art.\;11]{bakerMPF} the reader can find a solution 
of the Jacobi inversion problem on $(2,5)$-curve. 
If $D=(x_1,\,y_1)+(x_2,\,y_2)$, and $u=\mathcal{A}(D)$, then\footnote{The given formulas 
are obtained with the following differentials of the first and second kinds 
\begin{align*} 
& \rmd u_1 = \frac{x\,\rmd x}{-2y},& &\rmd r_3 = \frac{x^2 \rmd x}{-2y},&\\
& \rmd u_3 = \frac{\rmd x}{-2y},& &\rmd r_1 = (3x^3 +\lambda_4 x) \frac{\rmd x}{-2y}.&
\end{align*}}
\begin{subequations}\label{JIP25}
\begin{gather}
x_1+x_2 = \wp_{1,1}(u), \qquad   x_1 x_2 = - \wp_{1,3}(u),\\
y_i = -\tfrac{1}{2} \big( x_i \wp_{1,1,1}(u) +  \wp_{1,1,3}(u) \big),\qquad i=1,\, 2.
\end{gather}
\end{subequations}
\end{exam}

\begin{exam}[\textbf{Hyperelliptic curves}]
A solution of the Jacobi inversion problem on a hyperelliptic curve is proposed in  \cite[Art.\;216]{bakerAF}
and rediscovered in \cite[Theorem 2.2]{belHKF}.
Let a non-degenerate hyperelliptic curve of genus $g$ be defined\footnote{A $(2,2g+1)$-curve serves as a 
canonical form of hyperelliptic curves of genus $g$.} by
\begin{equation}\label{V22g1Eq}
-y^2 + x^{2 g+1} + \sum_{i=1}^{2g} \lambda_{2i+2} x^{2g-i} = 0.
\end{equation}
Let $u = \mathcal{A}(D)$ be the Abel image of  a degree $g$ non-special  divisor  $D$
on the curve. Then $D$ is uniquely defined by the system of equations 
\begin{subequations}\label{EnC22g1}
\begin{align}
&x^{g} -  \sum_{i=1}^{g} x^{g-i}  \wp_{1,2i-1}(u) = 0,\\ 
&2 y + \sum_{i=1}^{g} x^{g-i}  \wp_{1,1,2i-1}(u) = 0.
\end{align}
\end{subequations}
\end{exam}

\subsection{Entire rational functions on a curve}\label{ss:EnRatF} 
Let $\mathcal{R}$ be an entire rational function on a curve $V$ of genus $g$
from $\mathcal{V}^c_{(n,s)}$. That is, $\mathcal{R}$ is analytic on the curve
punctured at infinity,  and has a pole at infinity.
Let $\mathcal{R}$ be of weight $N$, that is the divisor of zeros consists of $N$ points,
and a pole of order $N$ is located at infinity. The function $\mathcal{R}$ of weight $N\geqslant 2g$
 has the form of a linear combination 
of the first $N-g+1$ monomials from the ordered list $\mathfrak{M}$. Indeed,
according to the Riemann---Roch theorem, only $N-g$ zeros can be chosen arbitrary, the remaining $g$ zeros
are determined by the curve equation. In general, an entire rational function has the form
\begin{equation}\label{RN}
\mathcal{R}(x,y) = \sum_{j=0}^{n-1} y^j \rho_{j}(x),
\end{equation}
where $\rho_{j}$, $0\leqslant j \leqslant n-1$, are polynomials in $x$.

\section{Solving the Jacobi inversion problem}
In what follows, curves are supposed to be non-hyperelliptic. The hyperelliptic case is considered separately in 
Example~\ref{E:HypC}.
\begin{theo}\label{T1}
Let $D$ be a positive non-special  divisor of degree $g$ 
on a non-degenerate $(n,s)$-curve of genus $g$. 
Then there exist $n-1$ entire rational functions $\mathcal{R}_{2g+l}$ of the weights $2g+l$, $l=0$, \ldots, $n-2$, 
which vanish on $D$, and the following system of equations defines $D$ uniquely 
\begin{gather}\label{REqs}
 \begin{split}
 &\mathcal{R}_{2g}(x,y) = 0,\\ 
 &\mathcal{R}_{2g+1}(x,y) = 0,\\
  &\vdots\\
 &\mathcal{R}_{2g+n-2}(x,y) = 0.
 \end{split}
 \end{gather}
\end{theo}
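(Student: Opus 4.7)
The plan is to construct each $\mathcal{R}_{2g+l}$ as a canonical representative in the Riemann--Roch space $L((2g+l)\infty - D)$, and then to show via a base-locus argument that the system \eqref{REqs} cuts out precisely $D$. For $l = 0, 1, \ldots, n-2$, since $D$ is non-special of degree $g$ and $\deg((2g+l)\infty - D) = g + l > 2g - 2$, Riemann--Roch yields $\ell((2g+l)\infty - D) = l + 1$. I would choose $\mathcal{R}_{2g+l}$ of exact Sat\={o} weight $2g+l$ by writing it as $\mathcal{M}^*_l$ (the unique monomial in $\mathfrak{M}$ of weight $2g+l$) plus a linear combination of the $g$ monomials in $\mathfrak{M}$ of weight strictly less than $2g$, with coefficients uniquely determined by the $g$ conditions of vanishing on $D$; non-speciality guarantees independence of these conditions. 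The $n-1$ functions so obtained have pairwise distinct leading weights, hence are linearly independent, and since $\ell((2g+n-2)\infty - D) = n-1$ they form a basis of the linear system $\mathfrak{F} := L((2g+n-2)\infty - D)$.

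The heart of the argument is to show that $\mathfrak{F}$ has no base points outside the support of $D$. Suppose some point $P$ on the curve with $P$ not in the support of $D$ were a common zero of every function in $\mathfrak{F}$; then $\ell((2g+n-2)\infty - D - P) = n - 1$. Using the identification $K \sim (2g-2)\infty$ of the canonical class on an $(n,s)$-curve, Riemann--Roch rewrites this as
\begin{equation*}
\ell((2g+n-2)\infty - D - P) = (n - 2) + \ell(-n\infty + D + P).
\end{equation*}
A nonzero section of $-n\infty + D + P$ would be an entire rational function of weight at most $n$ vanishing on the $g+1$ distinct points of $D + P$. But the space of such functions is the $2$-dimensional span $\{\alpha + \beta x\}$, so vanishing at $g+1$ distinct points forces all these points to share a common $x$-coordinate. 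Since the projection $(x,y) \mapsto x$ is an $n$-sheeted cover and $g + 1 > n$ for every non-hyperelliptic $(n,s)$-curve (the bounds $n \geq 3$, $s \geq 4$ imply $g \geq 3$), this is impossible. Hence $\ell(-n\infty + D + P) = 0$, contradicting the supposed value $n-1$.

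The main obstacle is precisely this base-locus step, relying on two crucial facts specific to non-hyperelliptic $(n,s)$-curves: the identification $K \sim (2g-2)\infty$, and the inequality $g + 1 > n$. Both would fail for $n = 2$, which is why the hyperelliptic case is handled separately. Divisors $D$ with repeated points can be accommodated by applying the same dimension count to an infinitesimal thickening of $D$ at the multiple point, with non-speciality again supplying the required independence of the vanishing conditions.
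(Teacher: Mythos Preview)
Your construction of the $\mathcal{R}_{2g+l}$ is fine in spirit, but note that the justification ``$\deg((2g+l)\infty - D) = g+l > 2g-2$'' fails for $l \leq g-2$, which covers the whole range $0 \leq l \leq n-2$ since $g \geq 3$ on a non-hyperelliptic $(n,s)$-curve. The correct reason is the one you also mention: non-speciality of $D$ gives $\ell(D)=1$, hence $\ell(D-(l+2)\infty)=0$, and then Riemann--Roch yields $\ell((2g+l)\infty - D)=l+1$. This is a minor slip.

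The real gap is in the base-locus step. You correctly arrive at
\[
\ell\big((2g+n-2)\infty - D - P\big) \;=\; (n-2) + \ell\big(D + P - n\infty\big),
\]
so an extra base point $P$ forces $\ell(D+P-n\infty)=1$. But you then misidentify the sections of $\mathcal{O}(D+P-n\infty)$. A nonzero $f \in L(D+P-n\infty)$ satisfies $\mathrm{div}(f) \geq n\infty - D - P$: it has \emph{poles} bounded by $D+P$ and a \emph{zero} of order $\geq n$ at infinity. It is \emph{not} an entire rational function of weight $\leq n$ vanishing on $D+P$ --- that description fits $L(n\infty - D - P)$, which does vanish for degree reasons but is the wrong space. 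Your argument about the $2$-dimensional span $\{\alpha+\beta x\}$ and the bound $g+1>n$ therefore does not apply. Since $\deg(D+P-n\infty)=g+1-n \geq 1$, there is no degree obstruction to $\ell(D+P-n\infty)=1$, and ruling it out requires a different idea.

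The paper avoids this difficulty altogether. Instead of a base-locus argument, it writes each $\mathcal{R}_{2g+l}(x,y) = \sum_{j=0}^{n-2} y^j \rho_j^{[2g+l]}(x)$, assembles the system as $\tens{R}(x)\,(1,y,\ldots,y^{n-2})^t = 0$, and shows by a direct degree estimate on the entries $\rho_j^{[2g+l]}$ that $\mathcal{X}(x) := \det \tens{R}(x)$ has degree at most $g$. Since $\mathcal{X}$ vanishes at each of the $g$ abscissae of $D$, its degree is exactly $g$ and its roots are precisely $x_1,\ldots,x_g$; the $y$-coordinates are then read off from the kernel of $\tens{R}(x_k)$. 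This elementary determinant computation is what replaces your Riemann--Roch base-locus step.
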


\begin{theo}\label{T2}
Let $V$ be a non-degenerate $(n,s)$-curve of genus $g$
with a list of monomials $\mathfrak{M}$  sorted in the ascending order of the  Sat\={o} weight, 
and $\mathcal{M}_{\wFr}$ denotes a monomial of weight $\wFr+2g-1$.
Let $\widetilde{\mathcal{M}}_{\ell}$ be the entire rational function 
equal to $(\partial_y f)\rmd r_\ell/ \rmd x$, and the differentials of the second kind $\rmd r_\ell$, $\ell=1$,
\ldots, $n-1$, be defined by  \eqref{rCond}.
Then the entire rational functions in \eqref{REqs}  have the form
\begin{equation}\label{R2giAb}
\mathcal{R}_{2g+\ell-1}(x,y;u) =  \widetilde{\mathcal{M}}_{\ell} 
- \sum_{i=1}^{g} A_{\ell,\wFr_i}(u) \mathcal{M}_{-\wFr_i}, \qquad  
1 \leqslant \ell \leqslant n-1,
\end{equation}
where $A_{\ell,\wFr_i}$ denote abelian functions on the Jacobian variety of $V$,
and $\wFr_i$ runs the Weierstrass gap sequence.
\end{theo}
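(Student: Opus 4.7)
The plan is to construct each $\mathcal{R}_{2g+\ell-1}$ via the ansatz suggested by the form of $\widetilde{\mathcal{M}}_\ell$, and then pin down its coefficients by demanding that it vanishes on the divisor $D$ encoded by $u$. First I would observe that $\widetilde{\mathcal{M}}_\ell = (\partial_y f)\,\rmd r_\ell/\rmd x$ is an entire rational function on $V$ of Sat\={o} weight $2g+\ell-1$, with leading monomial $\ell\,\mathcal{M}_\ell$; this is immediate from \eqref{rDif}, since the lower-order corrections $d_\kappa(\lambda)\mathcal{M}_\kappa$ contribute only polynomial terms in $x,y$ whose coefficients depend on $\lambda$ alone.

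Next I would appeal to the Riemann--Roch count recalled in Section~\ref{ss:EnRatF}: the space of entire rational functions of weight at most $2g+\ell-1$ is of dimension $g+\ell$ and is spanned by the first $g+\ell$ monomials of $\mathfrak{M}$, namely
\begin{equation*}
\mathcal{M}_{-\wFr_g},\ \mathcal{M}_{-\wFr_{g-1}},\ \dots,\ \mathcal{M}_{-\wFr_1},\ \mathcal{M}_1,\ \dots,\ \mathcal{M}_\ell.
\end{equation*}
The $\ell$ top monomials are already contained in $\widetilde{\mathcal{M}}_\ell$, while the $g$ low-weight monomials supply exactly the free parameters needed to impose the $g$ vanishing conditions on $D$. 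This forces the shape \eqref{R2giAb} and reduces the proof to solving the linear system
\begin{equation*}
\widetilde{\mathcal{M}}_\ell(P_k) \;=\; \sum_{i=1}^{g} A_{\ell,\wFr_i}(u)\,\mathcal{M}_{-\wFr_i}(P_k), \qquad k=1,\dots,g,
\end{equation*}
where $D = P_1 + \cdots + P_g$ is the unique reduced representative with $\mathcal{A}(D)=u$.

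The key step is to invoke non-specialness of $D$: the coefficient matrix $\mathcal{W}(D) = \bigl(\mathcal{M}_{-\wFr_i}(P_k)\bigr)_{i,k}$ is, up to the non-vanishing factors $\partial_y f(P_k)$, precisely the Wronskian-type matrix for the basis of holomorphic differentials \eqref{uDif} evaluated at $D$, and its non-degeneracy is equivalent to $D$ being non-special. Cramer's rule then expresses each $A_{\ell,\wFr_i}$ as a symmetric rational function of $P_1,\dots,P_g$. Since Abel's map restricts to a bijection $\mathcal{D}_g \to \Jac_g$, these symmetric functions descend to meromorphic functions of $u\in \Jac_g$; invariance under translation by the period lattice $\mathfrak{P}$ is automatic from the fact that $D$ depends only on $u\bmod\mathfrak{P}$, producing honest abelian functions on $\Jac$ whose pole locus is contained in the theta-divisor $\{\sigma(u)=0\}$.

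The main obstacle I anticipate is establishing cleanly that $\det\mathcal{W}(D)$ is non-zero throughout $\Jac_g$ and, ideally, identifying it with an explicit power of $\sigma(u)$ so that the singularities of the $A_{\ell,\wFr_i}(u)$ are localised on $\{\sigma=0\}$ rather than on some larger subvariety; the remaining claim, that the so-constructed $\mathcal{R}_{2g+\ell-1}$ vanishes on $D$, holds by construction of the linear system, and together with Theorem~\ref{T1} identifies $D$ as the unique simultaneous zero locus of the system \eqref{REqs}.
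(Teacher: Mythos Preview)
Your approach is correct in outline and does establish the stated conclusion, but it takes a genuinely different route from the paper's. The paper never sets up a linear system in the $A_{\ell,\wFr_i}$ or appeals to Cramer's rule; instead it exploits the vanishing of $\sigma\big(u-\mathcal{A}(x,y)\big)$ on $D$ together with the residue theorem to obtain
\[
\sum_{k=1}^{g} r_\ell(x_k,y_k)\;=\;-\res_{\xi=0}\Big(r_\ell(\xi)\,\tfrac{\rmd}{\rmd\xi}\log\sigma\big(u-\mathcal{A}(\xi)\big)\Big)\;=:\;R_\ell(u),
\]
and then differentiates this identity with respect to $x_1$. After multiplying through by $\partial_{y}f$, one reads off \eqref{R2giAb} with the explicit formula $A_{\ell,\wFr_i}(u)=\partial_{u_{\wFr_i}}R_\ell(u)$. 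Since $R_\ell$ is computed term by term from the $\xi$-expansion of $\log\sigma\big(u-\mathcal{A}(\xi)\big)$ and comes out as a polynomial in the $\zeta_k$ and $\wp$-functions, its $u$-derivatives are manifestly abelian and are delivered as concrete polynomial expressions in $\wp_{i,j}$, $\wp_{i,j,k}$, etc. Your linear-algebra argument is more elementary and avoids sigma altogether, but it only yields \emph{existence} of abelian coefficients and, as you yourself flag, leaves open the identification of $\det\mathcal{W}(D)$ with a power of $\sigma(u)$ (hence meromorphic extension across the theta divisor). The paper's residue route sidesteps that obstacle entirely and, more to the point, produces the explicit $\wp$-expressions that are the working content of Sections~4--6; your argument would prove the theorem as stated but would not supply those formulas.
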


\begin{proof}[Proof of Theorem~\ref{T1}]
The entire rational functions in \eqref{REqs} have the form
 \begin{equation}\label{R2gi}
\mathcal{R}_{2g+l}(x,y) = \sum_{j=0}^{n-2} y^j \rho_{j}^{[2g+l]}(x), \qquad  0 \leqslant l \leqslant n-2,
\end{equation}
where $\rho_{j}^{[2g+l]}$ denotes a polynomial in $x$. Note, that 
entire rational functions of the weights not greater than $2g+n-2$ do not contain the term $y^{n-1} \rho_{n-1}(x)$. 
That is, the highest degree of $y$ in \eqref{R2gi} is $n-2$.
Indeed, since $\wgt y^{n-1} = s(n-1)$ is greater than $2g+n-2=s (n-1)-1$, the minimal weight of a function which contains
the term $y^{n-1}$ is $2g+n-1$.

Suppose, we are given a positive non-special  divisor $D$  of degree $g$, namely 
$D=\sum_{k=1}^g (x_k,y_k)$. One can construct an entire rational function of weight $2g+l$, $l\geqslant 0$, 
 vanishing on $D$ by a determinant composed from 
 the first $g + l +1$ monomials of the ordered list $\mathfrak{M}$ 
corresponding to a curve under consideration:
\begin{multline*}
\mathcal{R}_{2g+l}(x,y) =\\
\begin{vmatrix}
\mathcal{M}_{-\mathfrak{w}_g}(x,y) & \ldots &
\mathcal{M}_{-\mathfrak{w}_1}(x,y) & \mathcal{M}_{1}(x,y) & \ldots & \mathcal{M}_{l+1}(x,y)  \\
\{\mathcal{M}_{-\mathfrak{w}_g}(x_k,y_k) & \ldots &
\mathcal{M}_{-\mathfrak{w}_1}(x_k,y_k) & \mathcal{M}_{1}(x_k,y_k) 
& \ldots & \mathcal{M}_{l+1}(x_k,y_k)\}_{k=1}^{g+l}
\end{vmatrix},
\end{multline*}
where the $(k+1)$-th row contains values of the monomials at $(x_k,y_k)$,
and the points $\{(x_k,y_k)\}_{k=1}^g$ $g$ form the support of $D$. 
Note, that the function $\mathcal{R}_{2g}$ is defined by $D$ uniquely up to a constant multiple.
The other functions $\mathcal{R}_{2g+l}$, $l>0$, require $l$ additional points to be defined uniquely. 
As a result, all functions $\mathcal{R}_{2g+l}$, $l\geqslant 0$, vanish on $D$.

The system \eqref{REqs} admits a matrix form
\begin{gather}\label{RY}
\tens{R}(x) \tens{Y} = 0,
\end{gather}
where
\begin{gather*}
 \tens{R}(x) = \begin{pmatrix} 
 \rho_{0}^{[2g]} & \rho_{1}^{[2g]} & \dots & \rho_{n-2}^{[2g]} \\ 
 \rho_{0}^{[2g+1]} & \rho_{1}^{[2g+1]} & \dots & \rho_{n-2}^{[2g+1]} \\ 
 \vdots & \vdots & \ddots & \vdots \\ 
 \rho_{0}^{[2g+n-2]} & \rho_{1}^{[2g+n-2]} & \dots & \rho_{n-2}^{[2g+n-2]}
 \end{pmatrix},\quad 
 \tens{Y} = \begin{pmatrix} 1 \\ y \\ \vdots \\ y^{n-2} \end{pmatrix}.
\end{gather*}
For brevity, we omit the arguments of polynomials
and write $\rho_{j}^{[2g+l]}$ instead of $\rho_{j}^{[2g+l]}(x)$.
Denote $\mathcal{X}(x)=\det \tens{R}(x)$.
Since all functions $\mathcal{R}_{2g+l}$, $l\geqslant 0$, vanish on $D$,
we have  $\deg \mathcal{X}(x) \geqslant g$.

Let $\mFr$ and $\rFr$ be the natural numbers  such that $s= n \mFr + \rFr$. That is $\mFr$
is the integer part of $s/n$ and $\rFr$ is the remainder.
Then
\begin{gather*}
\deg \rho_{j}^{[2g+l]} = \Big[ \frac{2g + l - j s }{n} \Big] 
=  \Big[\mFr (n-j-1) + \frac{\rFr}{n} (n-j-1) - \frac{ n - l - 1}{n} \Big].
\end{gather*}
Now, we compute the degree of $\mathcal{X}$ from its matrix form.
Actually, $\mathcal{X}$ is a sum of terms of the form $\prod_{j=0}^{n-2} \rho_{j}^{[2g+l(j)]}$,
where the function $l(j)$ is a permutation of the values $j$ running from $0$ to  $n-2$.
Then
\begin{multline*}
\deg \prod_{j=0}^{n-2} \rho_{j}^{[2g+l(j)]}
 \leqslant \Big[ \sum_{j=0}^{n-2}  \Big( \mFr (n-j-1) + \frac{\rFr}{n} (n-j-1) - \frac{1}{n}(n - l(j) - 1)\Big) \Big] \\
 = \Big[ \frac{1}{2} (n-1) n \mFr + \frac{\rFr}{2}(n-1) - \frac{1}{2} (n-1) \Big] 
 = \Big[ \frac{1}{2} (n-1) (s-1)\Big] = g.
\end{multline*}
Thus, $\deg \mathcal{X}(x) \leqslant g$.

Therefore, $\deg \mathcal{X}(x) = g$, and the divisor of zeros of $\mathcal{X}$ 
coincides with $x$-coordinates of the support of $D$. 
The corresponding $y$-coordinates are determined from the system \eqref{REqs}.
Repeated values of $x$ are allowed, unless $D$ contains $n$ points connected by 
involution\footnote{We say that $n$ points $(a,b_k)$, $k=1$, \ldots, $n$, are
connected by involution of an $(n,s)$-curve $f(x,y)=0$ if $b_k$ give all solutions
of $f(a,y)$=0.},
that turns $D$ into a special divisor. 
\end{proof}

\begin{proof}[Proof of Theorem~\ref{T2}]
We suppose that $u=\mathcal{A}(D)$, where $D$ is a degree $g$ non-special  divisor on $V$.
From the vanishing properties of the sigma function, we know that $\sigma\big(u - \mathcal{A}(x,y) \big)$
has zeros exactly at $D=\sum_{k=1}^g (x_k,\,y_k)$. By the residue theorem, we have
\begin{equation*}
 \sum_{k=1}^g r_{\ell} (x_k,\,y_k)  = 
 \frac{1}{2\pi \rmi}\oint_{C}  \bigg(\int_{\infty}^{(x,y)} \rmd r_{\ell} \bigg) 
 \, \rmd \log \sigma\big(u - \mathcal{A}(x,y) \big),
\end{equation*}
where $\rmd r_{\ell}$ is a differential, and $r_{\ell}$ is the corresponding integral of the second kind of weight $\ell$, 
and $C$ denotes a contour on the curve enclosing the divisor $D$, or equivalently, a contour cutting off
the infinity. After substituting the parameterization~\eqref{nsParam},
the integral on the right hand side transforms into one with respect to the parameter $\xi$
on $\mathbb{CP}^1$. The contour $C$ turns into one 
encircling $\xi=0$ in the clockwise direction. Note that $r_\ell$ has a pole at infinity ($\xi=0$), and so
we compute the integral through the residue at $\xi=0$, namely
\begin{equation}\label{rExpr}
 \sum_{k=1}^g r_{\ell} (x_k,\,y_k)  = - \res\limits_{\xi=0} \bigg(r_{\ell}(\xi) \frac{\rmd}{\rmd \xi}
 \log \sigma\big(u- \mathcal{A}(\xi)\big) \bigg) \equiv R_{\ell} (u).
\end{equation}
In fact, \eqref{rExpr} gives a definition of  $\zeta_{\ell}$.

Differentiating \eqref{rExpr} with respect to $x_1$, we find
\begin{equation}\label{RatFExpr}
\frac{\rmd r_{\ell}(x_1,\,y_1)}{\rmd x_1}  = \big(\partial_u R_{\ell} (u)  \big)^t\frac{\rmd u(x_1,\,y_1)}{\rmd x_1}.
\end{equation}
Multiplying \eqref{RatFExpr} by $\partial_{y_1} f(x_1,y_1)$, we find the relations \eqref{REqs}
with the entire rational functions of the form \eqref{R2giAb},
where $A_{\ell,\wFr_i}(u) = \partial_{u_{\wFr_i}} R_{\ell} (u)$. 
Note that $\rmd u_{\wFr}(x,y) / \rmd x= \mathcal{M}_{-\wFr} /(\partial_y f)$, where $\wFr$ runs 
the Weierstrass gap sequence  $\mathfrak{W}=\{\wFr_i\} $.  

Instead of the second kind differentials defined by  \eqref{rCond},
one can use a simpler form: $\rmd r_{\ell} = \ell \mathcal{M}_{\ell} \, \rmd x/(\partial_y f)$,
$\ell=1$, \ldots, $n-1$. Then by simplifying \eqref{rExpr} to the form with no $\zeta_k$, $k<\ell$,
one finds the relations \eqref{REqs}
with the entire rational functions of the form \eqref{R2giAb}.
\end{proof}

The proposed method is applicable to hyperelliptic curves. Though $n-1=1$ in this case,
two entire rational functions are needed to define  $D$ such that $u=\mathcal{A}(D)$.
\begin{exam}[\textbf{Hyperelliptic curves}]\label{E:HypC}
Recall that the curve \eqref{V22g1Eq} has
the Weierstrass sequence $\mathfrak{W}=\{2i-1 \mid i=1,\, \dots,\, g\}$. 
The list of monomials is $\mathfrak{M} = \{ y^j x^i \mid j=0,1,\, i =0,\,1,\, \dots \}$. The first $g$ monomials 
$\mathcal{M}_{-(2i-1)} = x^{g-i}$, $i=1,\, \dots,\, g$, produce differentials of the first kind.
 Then, we need differentials of the second kind of the weights 
$1$ and $2$. The latter are generated by the monomials $x^g$ and $y$, namely
\begin{align*}
 &(\rmd r_1(x,y),\, \rmd r_2(x,y)) = \frac{\rmd x}{\partial_y f} 
 \big( x^g,\, 2 y \big). 
\end{align*}
Using the method described in the proof of Theorem~\ref{T2},
we find that the preimage $D$ of $u=\mathcal{A}(D)$ is uniquely defined by the system
\begin{equation*}
\mathcal{R}_{2g}(x,y;u)=0,\qquad  \mathcal{R}_{2g+1}(x,y;u)=0
\end{equation*}
with two entire rational functions of the weights $2g$ and $2g+1$:
\begin{align*}
\mathcal{R}_{2g}(x,y;u) 
 &= x^{g} -  \sum_{i=1}^{g} \wp_{1,2i-1}(u) x^{g-i},\\ 
\mathcal{R}_{2g+1}(x,y;u) 
 &= 2 y + \sum_{i=1}^{g} \wp_{1,1,2i-1}(u) x^{g-i},
\end{align*}
which coincides with \eqref{EnC22g1}.
\end{exam}

Below we illustrate the proposed method with multiple examples. 
We consider trigonal, tetragonal and pentagonal curves.

\section{Jacobi inversion problem on trigonal curves}
Trigonal $(n,s)$-curves  split into two
types: $(3,3\mFr+1)$ and $(3,3\mFr+2)$-curves,
where $\mFr$ is a natural number.

\begin{theo}[$(3,3\mFr+1)$-Curves]\label{T:C33m1}
Let a non-degenerate $(3,3\mFr+1)$-curve of genus $g=3\mFr$, $\mFr \in \Natural$, be defined by
\begin{equation}\label{V33m1Eq}
-y^3 + x^{3\mFr+1} + y \sum_{i=0}^{2\mFr} \lambda_{3i+2} x^{2\mFr-i} + \sum_{i=1}^{3\mFr} \lambda_{3i+3} x^{3\mFr-i} = 0.
\end{equation}
Let $u = \mathcal{A}(D)$ be the Abel image of  a degree $g$ non-special  divisor  $D$
on the curve. Then $D$ is uniquely defined by the system of equations 
\begin{subequations}\label{REqsC33m1}
\begin{equation}
\mathcal{R}_{6\mFr}(x,y;u)=0,\qquad  \mathcal{R}_{6\mFr+1}(x,y;u)=0
\end{equation}
with two entire rational functions of the weights $2g=6\mFr$, $2g+1=6\mFr+1$:
\begin{align}
\mathcal{R}_{6\mFr}(x,y;u) 
 &= x^{2\mFr} -  \sum_{i=1}^{3\mFr} \wp_{1,\wFr_i}(u) \M_{-\wFr_i},\\ 
\mathcal{R}_{6\mFr+1}(x,y;u) 
 &= 2 y x^{\mFr} - \sum_{i=1}^{3\mFr} \big(\wp_{2,\wFr_i}(u) - \wp_{1,1,\wFr_i}(u) \big) \M_{-\wFr_i},
\end{align}
\end{subequations}
where
\begin{align*}
\M_{-(3i-2)} & = y x^{\mFr-i},\quad  i=1,\, \dots,\, \mFr,\\
\M_{-(3i-1)} & = x^{2\mFr-i}, \quad  i=1,\, \dots,\, 2\mFr.
\end{align*}
\end{theo}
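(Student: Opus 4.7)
The plan is to specialize Theorem~\ref{T2} to $n=3$, $s=3\mFr+1$. With these values $n-1=2$, so Theorem~\ref{T2} supplies exactly two defining equations, of weights $2g=6\mFr$ and $2g+1=6\mFr+1$, matching \eqref{REqsC33m1}.

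First I would identify the monomials that appear. The semigroup of non-gaps for $(3,3\mFr+1)$-curves is generated by $3$ and $3\mFr+1$, so the Weierstrass gap sequence is $\mathfrak{W}=\{3i-2\mid 1\leqslant i\leqslant \mFr\}\cup\{3i-1\mid 1\leqslant i\leqslant 2\mFr\}$. A direct weight count identifies the first $g=3\mFr$ monomials of $\mathfrak{M}$ as $\M_{-(3i-2)}=yx^{\mFr-i}$ and $\M_{-(3i-1)}=x^{2\mFr-i}$, as claimed. The next two monomials are $\M_1=x^{2\mFr}$ and $\M_2=yx^{\mFr}$, of Sat\={o} weights $2g$ and $2g+1$. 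Using the simpler form $\rmd r_\ell=\ell\,\M_\ell\,\rmd x/\partial_y f$ permitted at the end of the proof of Theorem~\ref{T2}, the principal parts of $\mathcal{R}_{6\mFr}$ and $\mathcal{R}_{6\mFr+1}$ are $\widetilde{\M}_1=x^{2\mFr}$ and $\widetilde{\M}_2=2yx^{\mFr}$.

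Next I would determine the coefficients $A_{\ell,\wFr_i}(u)=\partial_{u_{\wFr_i}}R_\ell(u)$ via the residue formula~\eqref{rExpr}. Using $r_\ell(\xi)=-\xi^{-\ell}+\text{hol}(\xi)$, $\mathcal{A}_\wFr(\xi)=\xi^{\wFr}/\wFr+O(\xi^{\wFr+2})$, and the Taylor expansion
\begin{equation*}
\log\sigma(u-\mathcal{A}(\xi))=\log\sigma(u)-\sum_i \mathcal{A}_i(\xi)\,\zeta_i(u)-\tfrac{1}{2}\!\sum_{i,j}\mathcal{A}_i(\xi)\mathcal{A}_j(\xi)\,\wp_{i,j}(u)+O(\mathcal{A}^3),
\end{equation*}
the coefficient of $\xi^{-1}$ in the integrand is $\zeta_1(u)$ for $\ell=1$, so $R_1(u)=-\zeta_1(u)$ and $A_{1,\wFr_i}=\wp_{1,\wFr_i}$. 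For $\ell=2$ one needs the coefficient of $\xi$ in $\rmd\log\sigma(u-\mathcal{A}(\xi))/\rmd\xi$, whose two surviving contributions are $-\zeta_2(u)$ (from the leading term of $\dot{\mathcal{A}}_2$) and $-\wp_{1,1}(u)$ (from $\dot{\mathcal{A}}_1\mathcal{A}_1$); this gives $R_2(u)=-\zeta_2(u)-\wp_{1,1}(u)$ and hence $A_{2,\wFr_i}=\wp_{2,\wFr_i}-\wp_{1,1,\wFr_i}$.

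The main obstacle is to verify that no further terms contaminate $R_2$; in particular, that no $\zeta_1$ sneaks in through a potential $\xi$-term of $\dot{\mathcal{A}}_1$. A weight argument settles this: the coefficient of $\xi^{\wFr-1+k}$ in $\dot{\mathcal{A}}_\wFr(\xi)$ has Sat\={o} weight $k$, so for $k=1$ it would have to be a polynomial of weight $1$ in the moduli $\lambda_j$; since the lightest modulus of a $(3,3\mFr+1)$-curve is $\lambda_2$, no such nonzero polynomial exists. Thus $\dot{\mathcal{A}}_\wFr(\xi)=\xi^{\wFr-1}(1+O(\xi^2))$, the residue calculation collapses to the two terms above, and differentiating $R_1$ and $R_2$ with respect to $u_{\wFr_i}$ yields \eqref{REqsC33m1}.
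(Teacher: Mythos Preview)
Your proof is correct and follows the same route as the paper: specialize Theorem~\ref{T2} to $n=3$, compute $R_1=-\zeta_1$ and $R_2=-\zeta_2-\wp_{1,1}$ via the residue formula~\eqref{rExpr}, and differentiate with respect to $u_{\wFr_i}$. The only difference is in execution: the paper obtains the key vanishing (no $\zeta_1$ contamination in $R_2$, equivalently $\mathcal{A}''(0)=(\delta_{i,2})$ and $r_2(\xi)=-\xi^{-2}+O(\xi^2)$) by explicitly expanding $x$, $y$, $\rmd u$, $\rmd r$ in the local parameter $\xi$ and reading off the first few coefficients, whereas your Sat\={o}-weight argument reaches the same conclusion without the explicit series. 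That same weight argument, incidentally, also shows that the simple choice $\rmd r_\ell=\ell\,\M_\ell\,\rmd x/\partial_y f$ already satisfies~\eqref{rCond} in this case (the paper verifies this ``by direct computation''), so no extra $\zeta_k$-elimination step is needed.
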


\begin{theo}[$(3,3\mFr+2)$-Curves]\label{T:C33m2}
Let a non-degenerate $(3,3\mFr+2)$-curve of genus $g=3\mFr+1$, $\mFr \in \Natural$, be defined by
\begin{equation}\label{V33m2Eq}
-y^3 + x^{3\mFr+2} + y \sum_{i=0}^{2\mFr + 1} \lambda_{3i+1} x^{2\mFr+1-i} 
+ \sum_{i=1}^{3\mFr +1} \lambda_{3i+3} x^{3\mFr+1-i} = 0.
\end{equation}
Let $u = \mathcal{A}(D)$ be the Abel image of  a degree $g$ non-special divisor  $D$
on the curve. Then $D$ is uniquely defined by the system of equations 
\begin{subequations}\label{REqsC33m2}
\begin{equation}
\mathcal{R}_{6\mFr+2}(x,y;u)=0,\qquad  \mathcal{R}_{6\mFr+3}(x,y;u)=0
\end{equation}
with two entire rational functions of the weights $2g=6\mFr+2$, $2g+1=6\mFr+3$:
\begin{align}
\mathcal{R}_{6\mFr+2}(x,y;u) 
 &= y x^{\mFr} - \sum_{i=1}^{3\mFr+1} \wp_{1,\wFr_i}(u) \M_{-\wFr_i},\\ 
\mathcal{R}_{6\mFr+3}(x,y;u) 
 &= 2 x^{2\mFr+1} + \lambda_1 y x^{\mFr}
 - \sum_{i=1}^{3\mFr+1}  \big(\wp_{2,\wFr_i}(u) - \wp_{1,1,\wFr_i}(u) \big)  \M_{-\wFr_i} ,
\end{align}
\end{subequations}
where
\begin{align*}
\M_{-(3i-1)} & = y x^{\mFr-i},\quad  i=1,\, \dots,\, \mFr,\\
\M_{-(3i-2)} & = x^{2\mFr+1-i}, \quad  i=1,\, \dots,\, 2\mFr +1.
\end{align*}
\end{theo}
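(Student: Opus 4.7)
The plan is to specialize Theorems~\ref{T1} and~\ref{T2} to the $(3,3\mFr+2)$-family, where $n=3$ and $g=3\mFr+1$. Theorem~\ref{T1} then asserts the existence of exactly $n-1=2$ entire rational functions $\mathcal{R}_{2g}$, $\mathcal{R}_{2g+1}$ of weights $6\mFr+2$ and $6\mFr+3$ cutting out $D$, and the determinantal degree argument in its proof delivers the uniqueness of $D$ verbatim.

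The first step is to make the monomial bookkeeping explicit. The Weierstrass gap sequence is $\mathfrak{W}=\{3i-2:1\leqslant i \leqslant 2\mFr+1\}\cup\{3i-1:1\leqslant i \leqslant \mFr\}$ of cardinality $g$, giving the $g$ monomials $\M_{-\wFr_i}$ listed in the theorem. The two next entries in $\mathfrak{M}$ are $\M_1=yx^{\mFr}$ of Sato weight $2g$ and $\M_2=x^{2\mFr+1}$ of Sato weight $2g+1$; by \eqref{R2giAb} these appear (multiplied by $\ell$) as the leading non-$\wp$ parts of $\mathcal{R}_{2g+\ell-1}$ for $\ell=1,2$.

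Next I would carry out the residue computation from the proof of Theorem~\ref{T2}. Near infinity, the curve equation forces
\begin{equation*}
y=\xi^{-3\mFr-2}\Big(1+\tfrac{\lambda_1}{3}\xi+O(\xi^2)\Big),
\end{equation*}
where the $\lambda_1/3$ correction originates from the term $\lambda_1 y x^{2\mFr+1}$ in~\eqref{V33m2Eq}---an effect absent in the $(3,3\mFr+1)$ situation and responsible for the extra $\lambda_1 yx^{\mFr}$ summand in $\mathcal{R}_{6\mFr+3}$. Choosing the normalized second-kind differential $\rmd r_2=(2x^{2\mFr+1}+\lambda_1 yx^{\mFr})\rmd x/\partial_y f$, whose holomorphic tail is fixed by~\eqref{rCond}, absorbs that correction neatly, while $\rmd r_1=yx^{\mFr}\rmd x/\partial_y f$ is already in simplest form. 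Plugging these into \eqref{rExpr} and differentiating with respect to $u_{\wFr_i}$, then rewriting the resulting $\zeta$-derivative combinations through \eqref{wp2Def}--\eqref{wp3Def}, should yield $A_{1,\wFr_i}(u)=\wp_{1,\wFr_i}(u)$ and $A_{2,\wFr_i}(u)=\wp_{2,\wFr_i}(u)-\wp_{1,1,\wFr_i}(u)$.

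The main obstacle is the orderly Laurent bookkeeping: tracking the $\lambda_1/3$ shift through $\mathcal{A}(\xi)$ and the chain-rule derivatives of $\log\sigma(u-\mathcal{A}(\xi))$ to high enough order in $\xi$, and verifying that every $\zeta$-combination appearing in the residue collapses into the stated linear combination of $\wp_{1,\wFr_i}$, $\wp_{2,\wFr_i}$, and $\wp_{1,1,\wFr_i}$ (with no stray $\lambda$-dependent pieces beyond the explicit $\lambda_1 yx^{\mFr}$ term). Once the two functions are produced in the stated form, Theorem~\ref{T1} recovers $D$ uniquely, completing the proof.
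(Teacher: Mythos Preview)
Your proposal is correct and follows essentially the same route as the paper's proof: specialize Theorems~\ref{T1} and~\ref{T2}, expand $y(\xi)$ to detect the $\lambda_1/3$ shift, adjust the second-kind differential to $\rmd\widetilde{r}_2=(2x^{2\mFr+1}+\lambda_1 yx^{\mFr})\rmd x/\partial_y f$ so that \eqref{rCond} holds, compute the residue \eqref{rExpr} to obtain $R_1=-\zeta_1$ and $R_2=-\zeta_2-\wp_{1,1}$, and then differentiate. The only cosmetic difference is that the paper differentiates \eqref{rExpr} with respect to $x_1$ and reads off the coefficients via the chain rule, whereas you differentiate $R_\ell$ directly in $u_{\wFr_i}$; these are the same computation, and the Laurent bookkeeping you flag as the main obstacle is exactly what the paper's proof consists of.
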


\subsection{Proof of Theorem~\ref{T:C33m1} ($(3,3\mFr+1)$-Curves)}
Let $\xi$ be a local parameter in the vicinity of infinity on the curve \eqref{V33m1Eq}, 
then
\begin{equation}\label{xyParamC33m1}
 x(\xi) = \xi^{-3},\qquad y(\xi) = \xi^{-3\mFr-1}\Big(1 + \frac{\lambda_2}{3} \xi^2 + 
 \frac{\lambda_5}{3}  \xi^5 + O(\xi^6)\Big).
\end{equation}
The basis of differentials of the first kind has the form
\begin{align*}
 \rmd u(x,y) &= \frac{\rmd x}{\partial_y f}  \big(y x^{\mFr-1},\, x^{2\mFr-1},\, \dots,\, y,\, x^{\mFr},\, \dots,\,x,\,1 \big)^t\\
 &= \frac{\rmd x}{\partial_y f}  \big(\M_{-1},\, \M_{-2},\, \dots,\, \M_{-(3\mFr-2)},\,  \M_{-(3\mFr-1)}, \notag\\
 &\phantom{mmmmmmmmmmn} \dots,\, \M_{-(6\mFr-4)},\, \M_{-(6\mFr-1)} \big)^t, \notag
\end{align*}
and the expansions near infinity are
\begin{align*}
&\rmd u_{3i-2} = y x^{\mFr-i}\frac{ \rmd x}{\partial_y f} 
=\xi^{3i-3} \Big(1 + O(\xi^4)\Big) \rmd \xi,\quad i = 1,\, \dots,\, \mFr,\\
&\rmd u_{3i-1} = x^{2\mFr-i}\frac{ \rmd x}{\partial_y f}  
= \xi^{3i-2} \Big(1 + O(\xi^2)\Big) \rmd \xi,\quad i = 1,\, \dots,\, 2\mFr.
\end{align*}
By integration with respect to $\xi$ we find expansions for integrals of the first kind:
\begin{subequations}\label{Int1s33m1} 
\begin{align}
& u_{3i-2}(\xi) = \frac{\xi^{3i-2}}{3i-2} 
+ O(\xi^{3i+2}),\ i = 1,\, \dots,\, \mFr,\\
& u_{3i-1}(\xi) = \frac{\xi^{3i-1}}{3i-1} 
+ O(\xi^{3i+1}),\ i = 1,\, \dots,\, 2\mFr.
\end{align}
\end{subequations}

The two differentials of the second kind of the smallest Sat\={o} weights are
\begin{align}\label{Dif2s33m1}
 &(\rmd r_1(x,y),\, \rmd r_2(x,y)) = \frac{\rmd x}{\partial_y f} 
 \big( x^{2\mFr},\, 2yx^{\mFr} \big). 
\end{align}
With the help of \eqref{xyParamC33m1} we find the corresponding expansions near infinity:
\begin{align*}
&\rmd r_1(\xi) = \xi^{-2} \Big(1 + O(\xi^2)\Big) \rmd \xi,\\
&\rmd r_2(\xi) = 2 \xi^{-3} \Big(1  + O(\xi^4)\Big) \rmd \xi.
\end{align*}
As seen from a direct computation, \eqref{Dif2s33m1} satisfy the condition \eqref{rCond}.
Next, we find the integrals of the second kind
\begin{align*}
& r_1(\xi) = -\xi^{-1}  + O(\xi) + c_1,\\
& r_2(\xi) = - \xi^{-2} + O(\xi^2) + c_2,
\end{align*}
where $c_1$ and $c_2$ are regularization constants\footnote{The reader can find a solution of
the regularization problem for integrals of the second kind on non-hyperelliptic curves in \cite{bl2018}, in particular,
$c_1 = 0$,  $c_2 = -\lambda_2/3$ on $\mathcal{V}_{(3,4)}$, and $c_1 = 0$,  $c_2 = -2\lambda_2/3$ on $\mathcal{V}_{(3,7)}$.}, which are not essential in the further computations.

The integrals of the first kind \eqref{Int1s33m1} define $\mathcal{A}(\xi)$ from \eqref{rExpr}, namely
$$\mathcal{A}(\xi) = \big(u_1(\xi),\, u_2(\xi),\, \dots,\,  u_{3\mFr-2}(\xi),\, u_{3\mFr-1}(\xi),\, \dots,\,  
u_{6\mFr-4}(\xi),\, u_{6\mFr-1}(\xi)\big)^t.$$
Then we compute residues on the right hand side of \eqref{rExpr} from the series expansions in~$\xi$.
Taking into account that $\mathcal{A}(0)=0$, $\mathcal{A}'(0)=(\delta_{i,1})$, $\mathcal{A}''(0)=(\delta_{i,2})$, 
where $\delta_{i,k}$ denotes the Kronecker delta and $i$ runs from $1$ to $g=3\mFr$, we find
\begin{equation*}
\frac{\rmd}{\rmd \xi} \log \sigma\big(u - \mathcal{A}(\xi)\big) = - \zeta_1(u)
- \big(\zeta_2(u) + \wp_{1,1}(u)\big) \xi + O(\xi^2).
\end{equation*}
Here $u = \mathcal{A}(D)$ is the Abel image of a non-special divisor 
$D = \sum_{k=1}^{3\mFr} (x_k,y_k)$.
Thus, \eqref{rExpr} acquires the form
\begin{gather}\label{ZetaC33m1}
\sum_{k=1}^{3\mFr} \begin{pmatrix}
r_1(x_k,\,y_k) \\ r_2(x_k,\,y_k) 
\end{pmatrix} = - \begin{pmatrix} \zeta_1(u) \\
\zeta_2(u) + \wp_{1,1}(u) \end{pmatrix}
\equiv \begin{pmatrix} R_1 (u) \\  R_2 (u)\end{pmatrix}.
\end{gather}
Differentiating the above relations with respect to $x_1$,  we obtain the relations
\begin{align*}
x_1^{2\mFr}
&= \bigg( y_1 \sum_{i=1}^{\mFr} \wp_{1,3i-2}(u) x_1^{\mFr-i} 
+ \sum_{i=1}^{2\mFr} \wp_{1,3i-1}(u) x_1^{2\mFr-i} \bigg),\\
2y_1 x_1^{\mFr}  &= 
\bigg( y_1 \sum_{i=1}^{\mFr} \big(\wp_{2,3i-2}(u) - \wp_{1,1,3i-2}(u) \big) x_1^{\mFr-i} \\
&\qquad + \sum_{i=1}^{2\mFr} \big( \wp_{2,3i-1}(u) - \wp_{1,1,3i-1}(u) \big) x_1^{2\mFr-i} \bigg),
\end{align*} 
which coincide with
\eqref{REqsC33m1}. $\qede$

\subsection{Example: $(3,4)$-Curve}
The family $\mathcal{V}_{(3,4)}$ of genus $3$ is defined by the equation
\begin{equation*}
-y^3 + x^4 + y (\lambda_2 x^2 + \lambda_5 x + \lambda_8)
+ \lambda_6 x^2 + \lambda_9 x + \lambda_{12} = 0,
\end{equation*}
with the Weierstrass gap sequence $\{1,\, 2, \,5\}$.
The first three monomials in the list $\mathfrak{M}$, sorted in the ascending order of the Sat\={o} weight, are
$\M_{-1} = y$, $\M_{-2} = x$,  $\M_{-5} = 1$. They serve as numerators of differentials of the first kind.
The next two monomials are $\M_{1}= x^2$,  $\M_{2}=y x$; they produce differentials of the second kind
$\rmd r_1$  and $\rmd r_2$ of the form \eqref{Dif2s33m1}.

A solution of the Jacobi inversion problem for $D=\sum_{k=1}^3 (x_k,y_k)$ on $\mathcal{V}_{(3,4)}$ 
such that $u = \mathcal{A}(D)$
 is given by the system
\begin{align*}
0=\mathcal{R}_6 (x,y;u) &\equiv x^2 - y \wp_{1,1}(u) - x \wp_{1,2}(u) - \wp_{1,5}(u), \\
0=\mathcal{R}_7 (x,y;u) &\equiv 2 y x - y \big(\wp_{1,2}(u) - \wp_{1,1,1}(u) \big) - 
x \big(\wp_{2,2}(u) - \wp_{1,1,2}(u) \big) \\
&\qquad \quad - \big(\wp_{2,5}(u) - \wp_{1,1,5}(u) \big).
\end{align*}

\begin{rem}
Let a trigonal curve of genus $3$ be defined by
the equation
\begin{equation*}
-y^3 + x^4 + y^2 (\lambda_1 x + \lambda_4)
+ y (\lambda_2 x^2 + \lambda_5 x + \lambda_8)
+ \lambda_3 x^3 + \lambda_6 x^2 + \lambda_9 x + \lambda_{12} = 0,
\end{equation*}
with the extra terms $y^2$, $x^3$, $y^2 x$. Then the 
differentials of the second kind satisfying \eqref{rCond} are
\begin{align*}
 &\begin{pmatrix} \rmd r_1(x,y) \\ \rmd r_2(x,y) \end{pmatrix}
 = \begin{pmatrix}
                   x^2 \\ 2 y x - \lambda_1 x^2
                  \end{pmatrix} \frac{\rmd x}{\partial_y f}.
\end{align*}
Then the expression for $\mathcal{R}_7$ acquires the form
\begin{align*}
\mathcal{R}_7 (x,y;u) &\equiv  2 y x - \lambda_1 x^2 - y \big(\wp_{1,2}(u) - \wp_{1,1,1}(u) \big) \\
&\qquad \quad - x \big(\wp_{2,2}(u) - \wp_{1,1,2}(u) \big) - \big(\wp_{2,5}(u) - \wp_{1,1,5}(u) \big).
\end{align*}
\end{rem}

\subsection{Example: $(3,7)$-Curve}
The family $\mathcal{V}_{(3,7)}$ of genus $6$ is defined by the equation
\begin{gather*}
\begin{split}
-y^3 + x^7 &+ y \big(\lambda_2 x^4 + \lambda_5 x^3 + \lambda_8 x^2 + \lambda_{11} x + \lambda_{14} \big) \\
&+ \lambda_6 x^5 + \lambda_9 x^4 + \lambda_{12} x^3 + \lambda_{15} x^2 + \lambda_{18} x + \lambda_{21} = 0
\end{split}
\end{gather*}
with the Weierstrass gap sequence $\{1,\, 2, \,4,\, 5,\, 8,\, 11\}$.
The first six monomials
$\M_{-1} = y x$, $\M_{-2} = x^3$,  $\M_{-4} = y$, $\M_{-5} = x^2$, $\M_{-8} = x$,  $\M_{-11} = 1$ from $\mathfrak{M}$
serve as numerators of differentials of the first kind.
The next two monomials $\M_{1}= x^4$,  $\M_{2}=y x^2$ produce differentials of the second kind
$\rmd r_1$  and $\rmd r_2$ of the form \eqref{Dif2s33m1}.

A solution of the Jacobi inversion problem for $D=\sum_{k=1}^6 (x_k,y_k)$ on $\mathcal{V}_{(3,7)}$ 
such that $u = \mathcal{A}(D)$ is given by the system
\begin{gather*}
\mathcal{R}_{12} (x,y;u) = 0,\qquad\qquad 
\mathcal{R}_{13} (x,y;u) = 0,
\end{gather*}
where
\begin{align*}
\mathcal{R}_{12} (x,y;u) &= x^4 - y x \wp_{1,1}(u) - x^3 \wp_{1,2}(u) - y \wp_{1,4}(u) \\
&\qquad \  - x^2 \wp_{1,5}(u) - x \wp_{1,8}(u) - \wp_{1,11}(u), \\
\mathcal{R}_{13} (x,y;u) &= 2 y x^2 - y x \big(\wp_{1,2}(u) - \wp_{1,1,1}(u) \big) - 
x^3 \big(\wp_{2,2}(u) - \wp_{1,1,2}(u) \big)  \\
&\qquad \quad \ - y \big(\wp_{2,4}(u) - \wp_{1,1,4}(u) \big) 
- x^2 \big(\wp_{2,5}(u) - \wp_{1,1,5}(u) \big) \\
&\qquad \quad \  - x \big(\wp_{2,8}(u) - \wp_{1,1,8}(u) \big) -
\big(\wp_{2,11}(u) - \wp_{1,1,11}(u) \big).
\end{align*}

\subsection{Proof of Theorem~\ref{T:C33m2} ($(3,3\mFr+2)$-Curves)}
Let $\xi$ be a local parameter in the vicinity of infinity on the curve \eqref{V33m2Eq}, 
then
\begin{equation}\label{xyParamC33m2}
 x(\xi) = \xi^{-3},\qquad y(\xi) = \xi^{-3\mFr-2}\Big(1 + \frac{\lambda_1}{3} \xi - 
 \frac{\lambda_1^3 }{81} \xi^3 + O(\xi^4)\Big).
\end{equation}
The basis of differentials of the first kind has the form
\begin{align}\label{Dif1s33m2}
 \rmd u(x,y) &= \frac{\rmd x}{\partial_y f}  \big(x^{2\mFr},\, y x^{\mFr-1},\, \dots,\, 
 x^{\mFr+1},\, y,\, x^{\mFr},\, \dots,\,x,\,1 \big)^t \\
  &= \frac{\rmd x}{\partial_y f}  \big(\M_{-1},\, \M_{-2},\, \dots,\, \M_{-(3\mFr-2)},\,  \M_{-(3\mFr-1)}, \notag\\
 &\phantom{mmmmmmm} \M_{-(3\mFr+1)},\, \dots,\, \M_{-(6\mFr-2)},\, \M_{-(6\mFr+1)} \big)^t, \notag
\end{align}
and the expansions near infinity are
\begin{align*}
&\rmd u_{3i-1} = y x^{\mFr-i} \frac{\rmd x}{\partial_y f} 
= \xi^{3i-2} \Big(1 + O(\xi^2)\Big) \rmd \xi,\quad i = 1,\, \dots,\, \mFr,\\
&\rmd u_{3i-2} = x^{2\mFr+1-i} \frac{\rmd x}{\partial_y f} 
= \xi^{3i-3}  \Big(1 - \frac{\lambda_1}{3} \xi + O(\xi^3)\Big) \rmd \xi,\quad i = 1,\, \dots,\, 2\mFr+1.
\end{align*}
By integration with respect to $\xi$ we find expansions for integrals of the first kind:
\begin{subequations}\label{Int1s33m2} 
\begin{align}
& u_{3i-1}(\xi) = \frac{\xi^{3i-1}}{3i-1} 
+ O(\xi^{3i+1}),\ i = 1,\, \dots,\, \mFr,\\
& u_{3i-2}(\xi) = \frac{\xi^{3i-2}}{3i-2} 
- \frac{\lambda_1}{3} \frac{\xi^{3i-1}}{3i-1} 
+ O(\xi^{3i+1}),\ i = 1,\, \dots,\, 2\mFr+1.
\end{align}
\end{subequations}

Let the two differentials of the second kind of the smallest  Sat\={o} weights be
\begin{align}\label{Dif2s33m2}
 &(\rmd r_1(x,y),\, \rmd r_2(x,y)) = \frac{\rmd x}{\partial_y f} 
 \big(yx^{\mFr},\, 2 x^{2\mFr+1}\big).
\end{align}
With the help of \eqref{xyParamC33m2} we find the corresponding expansions near infinity:
\begin{align*}
&\rmd r_1(\xi) = \xi^{-2} \Big(1 + O(\xi^2)\Big) \rmd \xi,\\
&\rmd r_2(\xi) = 2 \xi^{-3} \Big(1 - \frac{\lambda_1}{3} \xi + O(\xi^3)\Big) \rmd \xi.
\end{align*}
By a direct computation we find that $\rmd r_1$ satisfies the condition \eqref{rCond}, and $\rmd r_2$  does not. 
Thus, we replace the latter with
\begin{equation}\label{Dif2s33m2N}
 \rmd \widetilde{r}_2(x,y) =  (2 x^{2\mFr+1} + \lambda_1 y x^\mFr) \frac{\rmd x}{\partial_y f}, 
\end{equation}
which satisfies \eqref{rCond}.
The corresponding integrals of the second kind have the form
\begin{align*}
& r_1(\xi) = -\xi^{-1}  + O(\xi) + c_1,\\
& \widetilde{r}_2(\xi) = - \xi^{-2} - \frac{\lambda_1}{3} \xi^{-1} + O(\xi) + c_2.
\end{align*}

The integrals of the first kind \eqref{Int1s33m2} define $\mathcal{A}(\xi)$ from \eqref{rExpr}, namely
\begin{align*}
\mathcal{A}(\xi) &= \big(u_1(\xi),\, u_2(\xi),\, \dots,\, u_{3\mFr-2}(\xi),\, u_{3\mFr-1}(\xi),\\
&\qquad\qquad u_{3\mFr+1}(\xi),\, \dots,\,  u_{6\mFr-2}(\xi),\, u_{6\mFr+1}(\xi)\big)^t.
\end{align*}
Next, we compute residues on the right hand side of  \eqref{rExpr}.
Taking into account that $\mathcal{A}(0)=0$, $\mathcal{A}'(0)=(\delta_{i,1})$, 
$\mathcal{A}''(0)=(\delta_{i,2}-\frac{1}{3} \lambda_1 \delta_{i,1})$, 
where $\delta_{i,k}$ denotes the Kronecker delta and $i$ runs from $1$ to $g=3\mFr+1$, we find
\begin{equation*}
\frac{\rmd}{\rmd \xi} \log \sigma\big(u - \mathcal{A}(\xi)\big) = - \zeta_1(u)
- \big(\zeta_2(u) - \tfrac{1}{3} \lambda_1 \zeta_1(u) + \wp_{1,1}(u)\big) \xi + O(\xi^2).
\end{equation*}
Here $u = \mathcal{A}(D)$ is the Abel image of a non-special divisor 
$D = \sum_{k=1}^{3\mFr+1} (x_k,y_k)$.
Thus, \eqref{rExpr} acquires the form
\begin{gather}\label{ZetaC33m2}
 \sum_{k=1}^{3\mFr+1} \begin{pmatrix}
r_1(x_k,\,y_k) \\ \widetilde{r}_2(x_k,\,y_k) 
\end{pmatrix} = - \begin{pmatrix} \zeta_1(u) \\
\zeta_2(u)  + \wp_{1,1}(u) \end{pmatrix}
\equiv \begin{pmatrix} R_1 (u) \\  R_2 (u)\end{pmatrix}.
\end{gather}
Differentiating the above relations with respect to $x_1$, we find the relations
\begin{subequations}\label{RrelsC33m2}
\begin{align}
&y_1 x_1^{\mFr} 
= \bigg( y_1 \sum_{i=1}^{\mFr} \wp_{1,3i-1}(u) x_1^{\mFr-i} 
+ \sum_{i=1}^{2\mFr+1} \wp_{1,3i-2}(u) x_1^{2\mFr+1-i} \bigg),\\
&2 x_1^{2\mFr+1} + \lambda_1 y_1 x_1^{\mFr} = 
\bigg( y_1 \sum_{i=1}^{\mFr} \big(\wp_{2,3i-1}(u) - \wp_{1,1,3i-1}(u) \big) x_1^{\mFr-i} \\
&\phantom{mmmmmmmmmm} + \sum_{i=1}^{2\mFr+1} \big( \wp_{2,3i-2}(u) 
- \wp_{1,1,3i-2}(u) \big) x_1^{2\mFr+1-i} \bigg), \notag
\end{align} 
\end{subequations}
which coincide with
\eqref{REqsC33m2}. $\qede$

\begin{rem}\label{R:Dif2}
The differentials of the second kind $\rmd r_1$ defined by \eqref{Dif2s33m2}  and $\rmd \widetilde{r}_2$
defined by \eqref{Dif2s33m2N} on $\mathcal{V}_{(3,3\mFr+2)}$
are associated with the differentials of the first kind \eqref{Dif1s33m2}, according to \cite[Art.\;138]{bakerAF}.
Recall that differentials of the second kind $\rmd r$ and of the first kind $\rmd u$ form an associated system 
if the algebraic part of the fundamental bi-differential, see for example \cite[sect.\;10.1--10.2]{belMDSF}, 
in the form \cite[Eq.\;(1.9)]{belHKF} is symmetric.
The integrals of the second kind $r_{\wFr_i}$ obtained from the differentials associated with differentials of the first kind
are expressed through the corresponding zeta functions $\zeta_{\wFr_i}$ defined by \eqref{zetaDef} 
with Abelian functions added, see \eqref{ZetaC33m2} and \eqref{ZetaC33m1}. 
As a result, relations of the type \eqref{RrelsC33m2}
have the simplest form.
\end{rem}

\subsection{Example: $(3,5)$-Curve}
The family $\mathcal{V}_{(3,5)}$ of genus $4$ is defined by the equation
\begin{equation*}
-y^3 + x^5 + y (\lambda_1 x^3 + \lambda_4 x^2 + \lambda_7 x + \lambda_{10})
+ \lambda_6 x^3 + \lambda_9 x^2 + \lambda_{12} x + \lambda_{15} = 0,
\end{equation*}
and the Weierstrass sequence is $\{1,\, 2, \,4,\, 7\}$. The first four monomials 
in the list $\mathfrak{M}$ sorted in the ascending order of the Sat\={o} weight are
$\M_{-1} = x^2$, $\M_{-2} = y$, $\M_{-4} = x$,  $\M_{-7} = 1$; they serve as numerators of differentials of the first kind.
The next two monomials are $\M_{1}= y x$,  $\M_{2} = x^3$; they produce differentials of the second kind:
$\rmd r_1 = y x \rmd x/(\partial_y f)$  and $\rmd r_2 = (2 x^3 + \lambda_1 y x) \rmd x/(\partial_y f)$.

A solution of the Jacobi inversion problem for $D=\sum_{k=1}^4 (x_k,y_k)$ on $\mathcal{V}_{(3,5)}$ 
such that $u = \mathcal{A} (D)$
 is given by the system
\begin{align*}
0=\mathcal{R}_8 (x,y;u) &\equiv y x - \wp_{1,1}(u) x^2 - \wp_{1,2}(u) y - \wp_{1,4}(u) x - \wp_{1,7}(u), \\
0=\mathcal{R}_9 (x,y;u) &\equiv 2 x^3 + \lambda_1 y x - \big(\wp_{1,2}(u) - \wp_{1,1,1}(u) \big) x^2
- \big(\wp_{2,2}(u) - \wp_{1,1,2}(u) \big) y \\
&\qquad \qquad \qquad\ - \big(\wp_{2,4}(u) - \wp_{1,1,4}(u) \big) x -
\big(\wp_{2,7}(u) - \wp_{1,1,7}(u) \big).
\end{align*}

\section{Jacobi inversion problem on tetragonal curves}
There exist two types of tetragonal $(n,s)$-curves:  
$(4,4\mFr+1)$, $(4,4\mFr+3)$, where $\mFr$ is a natural number.

\begin{theo}[$(4,4\mFr+1)$-Curves]\label{T:C44m1}
Let a non-degenerate $(4,4\mFr+1)$-curve of genus $g=6\mFr$ be defined by
\begin{equation}\label{V44m1Eq}
-y^4 + x^{4\mFr+1} + y^2 \sum_{i=0}^{2\mFr} \lambda_{4i+2} x^{2\mFr-i} 
+ y \sum_{i=0}^{3\mFr} \lambda_{4i+3} x^{3\mFr-i} 
+ \sum_{i=1}^{4\mFr } \lambda_{4i+4} x^{4\mFr-i} = 0.
\end{equation}
Let $u = \mathcal{A}(D)$ be the Abel image of  a non-special degree $g$ divisor  $D$
on the curve. Then $D$ is uniquely defined by the system of equations 
\begin{subequations}\label{REqsC44m1}
\begin{gather}
\mathcal{R}_{12\mFr}(x,y;u)=0,\qquad \mathcal{R}_{12\mFr+1}(x,y;u)=0,\qquad
\mathcal{R}_{12\mFr+2}(x,y;u)=0
\end{gather}
with three entire rational functions of the weights $2g=12\mFr$, $2g+1=12\mFr+1$,
$2g+2=12\mFr+2$:
\begin{align}
\mathcal{R}_{12\mFr}(x,y;u) 
&= x^{3\mFr} - \sum_{i=1}^{6\mFr} \wp_{1,\wFr_{i}}(u) \M_{-\wFr_{i}}, \\
\mathcal{R}_{12\mFr+1}(x,y;u) 
&= 2 y x^{2\mFr}  - \sum_{i=1}^{6\mFr} \big( \wp_{2,\wFr_{i}}(u) - \wp_{1,1,\wFr_{i}}(u) \big)  \M_{-\wFr_{i}}, \\
\mathcal{R}_{12\mFr+2}(x,y;u) 
 &= 3 y^2 x^{\mFr} - \lambda_2 x^{3\mFr} \\
&\hspace{-10mm} - \sum_{i=1}^{6\mFr} \big(\wp_{3,\wFr_{i}}(u) - \tfrac{3}{2} \wp_{1,2,\wFr_{i}}(u)
 + \tfrac{1}{2} \wp_{1,1,1,\wFr_{i}}(u) \big)  \M_{-\wFr_{i}}, \notag 
 \end{align}
\end{subequations}
where
\begin{align*}
&\M_{-(4i-3)} = y^2 x^{\mFr-i},\quad  i=1,\, \dots,\, \mFr, \\
&\M_{-(4i-2)} = y x^{2\mFr-i},\quad  i=1,\, \dots,\, 2\mFr, \\
&\M_{-(4i-1)} = x^{3\mFr-i},\quad  i=1,\, \dots,\, 3\mFr.
\end{align*}
\end{theo}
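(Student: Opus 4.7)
The strategy is to mirror the proofs of Theorems~\ref{T:C33m1} and~\ref{T:C33m2}, now with $n=4$ so that $n-1=3$ entire rational functions must be produced. The plan is to apply Theorem~\ref{T2} with $\ell=1,2,3$: establish the local-parameter expansion at infinity, exhibit the first-kind basis together with three admissible second-kind differentials satisfying~\eqref{rCond}, compute the residues in~\eqref{rExpr}, and differentiate in $x_1$ to extract~\eqref{REqsC44m1}.

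First I would set $x(\xi)=\xi^{-4}$ and, substituting into~\eqref{V44m1Eq}, derive
\begin{equation*}
y(\xi) = \xi^{-(4\mFr+1)}\Big(1 + \tfrac{\lambda_2}{4}\xi^2 + \tfrac{\lambda_3}{4}\xi^3 + O(\xi^4)\Big),
\end{equation*}
with the first correction coming from $\lambda_2 y^2 x^{2\mFr}$ and the next from $\lambda_3 y x^{3\mFr}$. Using the monomials $\M_{-\wFr_i}$ listed in the theorem, I would expand all $g=6\mFr$ first-kind differentials and integrate term-by-term to obtain $u_{\wFr_i}(\xi)$, assembling $\mathcal{A}(\xi)$.

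Next I would select second-kind differentials from $\M_1=x^{3\mFr}$, $\M_2=yx^{2\mFr}$, $\M_3=y^2 x^{\mFr}$. A direct check against~\eqref{rCond} shows that $\rmd r_1 = x^{3\mFr}\,\rmd x/\partial_y f$ and $\rmd r_2 = 2yx^{2\mFr}\,\rmd x/\partial_y f$ are admissible, but the naive $3 y^2 x^{\mFr}\,\rmd x/\partial_y f$ fails because the $\tfrac{\lambda_2}{4}\xi^2$ piece of $y(\xi)$ feeds an unwanted $\lambda_2\xi^{-2}\,\rmd\xi$ contribution into its expansion. Cancellation requires subtracting a multiple of $\M_1$, giving
\begin{equation*}
\rmd\widetilde{r}_3(x,y) = \bigl(3 y^2 x^{\mFr} - \lambda_2 x^{3\mFr}\bigr)\frac{\rmd x}{\partial_y f},
\end{equation*}
in agreement with the $-\lambda_2 x^{3\mFr}$ term in $\mathcal{R}_{12\mFr+2}$. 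Integrating the three differentials in $\xi$ produces the second-kind integrals up to regularization constants, which will drop out on differentiation.

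Finally I would apply~\eqref{rExpr}. Using $\mathcal{A}(0)=0$, $\mathcal{A}'(0)=(\delta_{i,1})$, and the explicit forms of $\mathcal{A}''(0)$ and $\mathcal{A}'''(0)$ (both carrying $\lambda_2$-corrections inherited from $y(\xi)$), I would expand $(\rmd/\rmd\xi)\log\sigma(u-\mathcal{A}(\xi))$ through order $\xi^2$ and read off $R_1(u)$, $R_2(u)$, $R_3(u)$ in the form $-\zeta_\ell(u)$ plus abelian functions. Differentiating the three residue identities with respect to $x_1$ and clearing $\partial_{y_1}f$ via $\rmd u_\wFr/\rmd x = \M_{-\wFr}/\partial_y f$ then yields~\eqref{REqsC44m1}. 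The main obstacle is bookkeeping the $\xi^2$ coefficient: contributions from $\mathcal{A}'''(0)$ (linear $\zeta$-terms), from the cross products $\mathcal{A}'(0)\mathcal{A}''(0)$ (producing $\wp_{1,2}$), and from $(\mathcal{A}'(0))^3$ (producing $\wp_{1,1,1}$) must combine with the $\lambda_2$-term built into $\widetilde{r}_3$ so that the clean symmetric combination $\wp_{3,\wFr_i}-\tfrac{3}{2}\wp_{1,2,\wFr_i}+\tfrac{1}{2}\wp_{1,1,1,\wFr_i}$ of~\eqref{REqsC44m1} emerges.
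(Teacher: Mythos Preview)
Your proposal is correct and follows essentially the same route as the paper: local expansion at infinity, construction of the first-kind basis and the three second-kind differentials (with the $-\lambda_2 x^{3\mFr}$ correction to $\rmd r_3$ to satisfy~\eqref{rCond}), computation of the residues in~\eqref{rExpr}, and differentiation in $x_1$. One small slip: in this case $\mathcal{A}''(0)=(\delta_{i,2})$ carries \emph{no} $\lambda_2$-correction---only $\mathcal{A}^{(3)}(0)=(2\delta_{i,3}+\tfrac{1}{2}\lambda_2\,\delta_{i,1})$ does---so the $\xi^1$ coefficient is simply $-(\zeta_2+\wp_{1,1})$ and the $\lambda_2$-bookkeeping enters only at order $\xi^2$.
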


\begin{theo}[$(4,4\mFr+3)$-Curves]\label{T:C44m3}
Let a non-degenerate $(4,4\mFr+3)$-curve of genus $g=6\mFr+3$ be defined by
\begin{align}\label{V44m3Eq}
-y^4 + x^{4\mFr+3} &+ y^2 \sum_{i=0}^{2\mFr+1} \lambda_{4i+2} x^{2\mFr+1-i} \\
&+ y \sum_{i=0}^{3\mFr+2} \lambda_{4i+1} x^{3\mFr+2-i} 
+ \sum_{i=1}^{4\mFr +2} \lambda_{4i+4} x^{4\mFr+2-i} = 0. \notag
\end{align}
Let $u = \mathcal{A}(D)$ be the Abel image of  a non-special degree $g$ divisor  $D$
on the curve. Then $D$ is uniquely defined by the system of equations 
\begin{subequations}\label{REqsC44m3}
\begin{gather}
\mathcal{R}_{12\mFr+6}(x,y;u)=0,\qquad \mathcal{R}_{12\mFr+7}(x,y;u)=0,\qquad
\mathcal{R}_{12\mFr+8}(x,y;u)=0
\end{gather}
with three entire rational functions of the weights $2g=12\mFr+6$, $2g+1=12\mFr+7$,
$2g+2=12\mFr+8$:
\begin{align}
&\mathcal{R}_{12\mFr+6}(x,y;u) 
= y^2 x^{\mFr} - \sum_{i=1}^{6\mFr+3} \wp_{1,\wFr_{i}}(u) \M_{-\wFr_{i}}, \\
&\mathcal{R}_{12\mFr+7}(x,y;u) 
= 2 y x^{2\mFr+1} \,{+}\, \lambda_1  y^2 x^{\mFr}  
\,{-}\, \sum_{i=1}^{6\mFr+3} \big( \wp_{2,\wFr_{i}}(u) - \wp_{1,1,\wFr_{i}}(u) \big)  \M_{-\wFr_{i}}, \\
&\mathcal{R}_{12\mFr+8}(x,y;u) 
 = 3x^{3\mFr+2} + 2\lambda_1 y x^{2\mFr+1}  + \lambda_2 y^2 x^{\mFr}  \\
&\hspace{3mm} - \sum_{i=1}^{6\mFr+3} \big(\wp_{3,\wFr_{i}}(u) - \tfrac{3}{2} \wp_{1,2,\wFr_{i}}(u)
 + \tfrac{1}{2} \lambda_1 \wp_{1,1,\wFr_{i}}(u) + \tfrac{1}{2} \wp_{1,1,1,\wFr_{i}}(u) \big)  \M_{-\wFr_{i}}, \notag 
 \end{align}
\end{subequations}
where
\begin{align*}
&\M_{-(4i-1)} = y^2 x^{\mFr-i},\quad  i=1,\, \dots,\, \mFr, \\
&\M_{-(4i-2)} = y x^{2\mFr+1-i},\quad  i=1,\, \dots,\, 2\mFr+1, \\
&\M_{-(4i-3)} = x^{3\mFr+2-i},\quad  i=1,\, \dots,\, 3\mFr+2.
\end{align*}
\end{theo}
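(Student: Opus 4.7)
The plan is to follow the strategy used for Theorems~\ref{T:C33m1} and \ref{T:C33m2}, adapted to the tetragonal setting where $n-1=3$ second-kind differentials enter. First I would introduce a local parameter $\xi$ at infinity and solve \eqref{V44m3Eq} order by order to obtain
\begin{equation*}
x(\xi)=\xi^{-4},\quad y(\xi)=\xi^{-(4\mFr+3)}\bigl(1+\tfrac{\lambda_1}{4}\xi+a_2\xi^2+O(\xi^3)\bigr),
\end{equation*}
with $a_2$ an explicit polynomial in $\lambda_1,\lambda_2$. I would then expand each differential $\rmd u_{\wFr_i}=\M_{-\wFr_i}\rmd x/\partial_y f$ in $\xi$ using the monomials listed in the statement, integrate to obtain $u_{\wFr_i}(\xi)=\xi^{\wFr_i}/\wFr_i+O(\xi^{\wFr_i+1})$, and thereby read off the vectors $\mathcal{A}'(0),\mathcal{A}''(0),\mathcal{A}'''(0)$, which carry $\lambda_1,\lambda_2$-dependent components coming from the $y$-expansion.

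Next I would write down the three candidate second-kind differentials of the smallest weights, namely $\ell\M_\ell\,\rmd x/\partial_y f$ with $\M_1=y^2x^\mFr$, $\M_2=yx^{2\mFr+1}$, $\M_3=x^{3\mFr+2}$, test each against the bilinear residue condition \eqref{rCond} using the $\xi$-expansions from the first step, and add $\lambda$-corrections (built from the lower-weight $\widetilde{\M}_{\ell'}$, $\ell'<\ell$) to enforce it. Following the pattern already visible in Theorem~\ref{T:C33m2}, I expect the normalized forms
\begin{align*}
\rmd\widetilde{r}_1 &= y^2x^\mFr\,\rmd x/\partial_y f,\\
\rmd\widetilde{r}_2 &= (2yx^{2\mFr+1}+\lambda_1 y^2x^\mFr)\,\rmd x/\partial_y f,\\
\rmd\widetilde{r}_3 &= (3x^{3\mFr+2}+2\lambda_1 yx^{2\mFr+1}+\lambda_2 y^2x^\mFr)\,\rmd x/\partial_y f,
\end{align*}
whose numerators coincide with the leading entire-rational parts in $\mathcal{R}_{12\mFr+6},\mathcal{R}_{12\mFr+7},\mathcal{R}_{12\mFr+8}$.

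With these in hand I would integrate to obtain $\widetilde{r}_\ell(\xi)$, expand $\frac{\rmd}{\rmd\xi}\log\sigma(u-\mathcal{A}(\xi))$ through order $\xi^2$ by combining Taylor's theorem with the chain rule, and apply \eqref{rExpr} for $\ell=1,2,3$. This produces three identities $\sum_{k}\widetilde{r}_\ell(x_k,y_k)=R_\ell(u)$, where $R_1=-\zeta_1$, $R_2=-(\zeta_2+\wp_{1,1})$, and $R_3$ is a specific combination of $\zeta_3,\zeta_2,\zeta_1,\wp_{1,2},\wp_{1,1,1}$ whose coefficients involve $\lambda_1,\lambda_2$. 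Differentiation in $x_1$ via \eqref{RatFExpr}, followed by multiplication by $\partial_{y_1}f(x_1,y_1)$, then rewrites each identity as a polynomial equation of the form \eqref{R2giAb} which, matched term by term, reproduces \eqref{REqsC44m3}.

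The main obstacle is the $\ell=3$ identity. At order $\xi^2$ the chain rule mixes $\zeta_3,\zeta_2,\zeta_1$ (each weighted by $\lambda_1,\lambda_2$-dependent components of $\mathcal{A}''(0),\mathcal{A}'''(0)$) with the three-index $\wp_{i,j,k}$ coming from the second derivative of $\log\sigma$ and the two-index $\wp_{i,j}$ coming from the first. Verifying that, after reinserting the $\lambda_1,\lambda_2$ corrections built into $\rmd\widetilde{r}_3$, the surviving coefficient of $\M_{-\wFr_i}$ collapses exactly to $\wp_{3,\wFr_i}-\tfrac{3}{2}\wp_{1,2,\wFr_i}+\tfrac{1}{2}\lambda_1\wp_{1,1,\wFr_i}+\tfrac{1}{2}\wp_{1,1,1,\wFr_i}$ is the main computational burden of the proof; by Remark~\ref{R:Dif2}, the validity of this cancellation is precisely what it means for $\rmd\widetilde{r}_\ell$ to form a system associated to $\rmd u_{\wFr_i}$.
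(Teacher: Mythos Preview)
Your proposal is correct and follows essentially the same route as the paper's own proof: local expansion at infinity, construction of the associated second-kind differentials $\rmd r_1,\rmd\widetilde r_2,\rmd\widetilde r_3$ via \eqref{rCond} (you predict their numerators exactly), the residue identity \eqref{rExpr}, and differentiation in $x_1$ to recover \eqref{REqsC44m3}. The paper carries out precisely this computation, arriving at $R_3(u)=-(\zeta_3+\tfrac{3}{2}\wp_{1,2}-\tfrac{1}{2}\lambda_1\wp_{1,1}-\tfrac{1}{2}\wp_{1,1,1})$ after the $\zeta_2,\zeta_1$ terms cancel against the $\lambda$-corrections in $\widetilde r_3$, just as you anticipate.
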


\subsection{Proof of Theorem~\ref{T:C44m1} ($(4,4\mFr+1)$-Curves)}
Let $\xi$ be a local coordinate in the vicinity of infinity on the curve \eqref{V44m1Eq}, then
\begin{equation}\label{xyParamC44m1}
 x(\xi) = \xi^{-4},\qquad y(\xi) = \xi^{-4\mFr-1}\bigg(1 + \frac{\lambda_2}{4} \xi^2 
 + \frac{\lambda_3}{4} \xi^3 + \frac{\lambda_2^2}{32}  \xi^4 + O(\xi^6)\bigg).
\end{equation}
The basis of differentials of the first kind has the form
\begin{align*}
 \rmd u(x,y) = \frac{\rmd x}{\partial_y f}  \big(y^2 x^{\mFr-1},\, y x^{2\mFr-1},\, x^{3\mFr-1},\, 
 &\dots,\, y^2,\, y x^{\mFr},\, x^{2\mFr},\, \\
 &\dots,\, y,\, x^{\mFr},\, \dots,\,x,\,1 \big)^t. \notag
\end{align*}
and the expansions near infinity are
\begin{align*}
&\rmd u_{4i-3} = y^2 x^{\mFr-i} \frac{\rmd x}{\partial_y f} 
=\xi^{4i-4}  \Big(1 + \frac{\lambda_2}{4} \xi^2 + O(\xi^4)\Big) \rmd \xi,\quad i = 1,\, \dots,\, \mFr,\\
&\rmd u_{4i-2} = y x^{2\mFr-i} \frac{\rmd x}{\partial_y f} 
=\xi^{4i-3}  \Big(1 + O(\xi^3)\Big) \rmd \xi,\quad i = 1,\, \dots,\, 2\mFr,\\
&\rmd u_{4i-1} = x^{3\mFr-i} \frac{\rmd x}{\partial_y f} 
=\xi^{4i-2}  \Big(1 + O(\xi^2)\Big) \rmd \xi,\quad i = 1,\, \dots,\, 3\mFr.
\end{align*}
By integration with respect to $\xi$ we find expansions for integrals of the first kind:
\begin{subequations}\label{Int1s44m1} 
\begin{align}
& u_{4i-3}(\xi) = \frac{\xi^{4i-3}}{4i-3} 
+ \frac{\lambda_2}{4} \frac{\xi^{4i-1}}{4i-1} 
+ O(\xi^{4i+1}),\ i = 1,\, \dots,\, \mFr, \\
& u_{4i-2}(\xi) = \frac{\xi^{4i-2}}{4i-2} 
+ O(\xi^{4i+1}),\ i = 1,\, \dots,\, 2\mFr, \\
& u_{4i-1}(\xi) = \frac{\xi^{4i-1}}{4i-1} 
+ O(\xi^{4i+1}),\ i = 1,\, \dots,\, 3\mFr.
\end{align}
\end{subequations}

Let the three differentials of the second kind of weights $1$, $2$, $3$ be
\begin{align}\label{Dif2C44m1}
 &\big(\rmd r_1(x,y),\, \rmd r_2(x,y),\, \rmd r_3(x,y) \big) = \frac{\rmd x}{\partial_y f} 
 \big(x^{3\mFr},\, 2 yx^{2\mFr},\, 3 y^2 x^{\mFr}\big). 
\end{align}
and the corresponding expansions near infinity
\begin{align*}
&\rmd r_1(\xi) = \xi^{-2} \Big(1 + O(\xi^2)\Big) \rmd \xi,\\
&\rmd r_2(\xi) = 2 \xi^{-3} \Big(1 + O(\xi^3)\Big) \rmd \xi,\\
&\rmd r_3(\xi) = 3 \xi^{-4}\Big(1 + \frac{\lambda_2}{4} \xi^2 + O(\xi^4)\Big) \rmd \xi.
\end{align*}
In order to satisfy \eqref{rCond}, we replace $\rmd r_3$ with
\begin{equation}\label{Dif3s44m1N}
 \rmd \widetilde{r}_3(x,y) =  (3 y^2 x^{\mFr} - \lambda_2 x^{3\mFr}) \frac{\rmd x}{\partial_y f}.
\end{equation}
The differentials of the second kind $\rmd r_1$, $\rmd r_2$, $\rmd \widetilde{r}_3$  form an
associated system with differentials of the first kind. 
The corresponding integrals of the second kind have the form
\begin{align*}
& r_1(\xi) = -\xi^{-1}  + O(\xi) + c_1,\\
& r_2(\xi) = - \xi^{-2}  + O(\xi) + c_2,\\
& \widetilde{r}_3(\xi) = - \xi^{-3} + \frac{\lambda_2}{4} \xi^{-1} + O(\xi) + c_3.
\end{align*}

The integrals of the first kind \eqref{Int1s44m1} define $\mathcal{A}(\xi)$ from \eqref{rExpr}, namely
\begin{align*}
\mathcal{A}(\xi) = \big(u_1(\xi),\, u_2(\xi),\, &u_3(\xi),\, \dots,\,  u_{4\mFr-3}(\xi),\, u_{4\mFr-2}(\xi),\,
u_{4\mFr-1}(\xi),\, \dots,\,  \\
& u_{8\mFr-2}(\xi),\, u_{8\mFr-1}(\xi),\, \dots,\, u_{12\mFr-5}(\xi),\, u_{12\mFr-1}(\xi)\big)^t.
\end{align*}
Next, we compute residues on the right hand side of \eqref{rExpr}.
Taking into account that $\mathcal{A}(0)=0$, $\mathcal{A}'(0)=(\delta_{i,1})$, 
$\mathcal{A}''(0)=(\delta_{i,2})$, $\mathcal{A}^{(3)}(0)=(2\delta_{i,3}+\frac{1}{2} \lambda_2 \delta_{i,1})$, 
where $\delta_{i,k}$ denotes the Kronecker delta and $i$ runs from $1$ to $g = 6\mFr$, we find
\begin{multline*}
\frac{\rmd}{\rmd \xi} \log \sigma\big(u - \mathcal{A}(\xi)\big) = - \zeta_1(u)
- \big(\zeta_2(u) + \wp_{1,1}(u)\big) \xi \\
- \big( \zeta_3(u) + \tfrac{1}{4}\lambda_2  \zeta_1(u) + \tfrac{3}{2} \wp_{1,2}(u) 
- \tfrac{1}{2} \wp_{1,1,1}(u) \big) \xi^2 + O(\xi^3).
\end{multline*}
Here $u = \mathcal{A}(D)$ is the Abel's image of a non-special divisor 
$D = \sum_{k=1}^{6\mFr} (x_k,y_k)$.
Thus, \eqref{rExpr} acquires the form
\begin{align*}
& \sum_{k=1}^{6\mFr} \begin{pmatrix}
r_1(x_k,\,y_k) \\ r_2(x_k,\,y_k) \\ \widetilde{r}_3(x_k,\,y_k) 
\end{pmatrix} = - \begin{pmatrix} \zeta_1(u) \\
\zeta_2(u) + \wp_{1,1}(u) \\
\zeta_3(u) + \tfrac{3}{2} \wp_{1,2}(u) - \tfrac{1}{2} \wp_{1,1,1}(u)
\end{pmatrix} \equiv \begin{pmatrix} R_1 (u) \\  R_2 (u) \\ R_3 (u) \end{pmatrix}.
\end{align*}
Differentiating the above relations with respect to $x_1$, we find the relations
\begin{align*}
&x_1^{3\mFr} 
=  y_1^2 \sum_{i=1}^{\mFr} \wp_{1,4i-3} x_1^{\mFr-i}+
y_1 \sum_{i=1}^{2\mFr} \wp_{1,4i-2}x_1^{2\mFr-i} 
+ \sum_{i=1}^{3\mFr} \wp_{1,4i-1} x_1^{3\mFr-i}, \\
&2y_1 x_1^{2\mFr}  = 
 y_1^2 \sum_{i=1}^{\mFr} \big(\wp_{2,4i-3} - \wp_{1,1,4i-3} \big) x_1^{\mFr-i} 
 + y_1 \sum_{i=1}^{2\mFr} \big(\wp_{2,4i-2} - \wp_{1,1,4i-2} \big) x_1^{2\mFr-i} \\
&\phantom{mmmmm} + \sum_{i=1}^{3\mFr} \big( \wp_{2,4i-1} - \wp_{1,1,4i-1} \big) x_1^{3\mFr-i},\\
&3 y_1^2 x_1^{\mFr} - \lambda_2 x_1^{3\mFr} = 
 y_1^2 \sum_{i=1}^{\mFr} \big(\wp_{3,4i-3}- \tfrac{3}{2} \wp_{1,2,4i-3}
+ \tfrac{1}{2} \wp_{1,1,1,4i-3}\big) x_1^{\mFr-i} \\
&\phantom{mmmmmmmmm} + y_1 \sum_{i=1}^{2\mFr} \big(\wp_{3,4i-2}- \tfrac{3}{2} \wp_{1,2,4i-2}
+ \tfrac{1}{2} \wp_{1,1,1,4i-2}\big) x_1^{2\mFr-i} \\
&\phantom{mmmmmmmmm} + \sum_{i=1}^{3\mFr} \big(\wp_{3,4i-1}- \tfrac{3}{2} \wp_{1,2,4i-1}
+ \tfrac{1}{2} \wp_{1,1,1,4i-1}\big) x_1^{3\mFr-i},
\end{align*} 
which coincide with \eqref{REqsC44m1}. 
The argument of abelian functions is omitted, that is $\wp_{i,j}$,
$\wp_{i,j,k}$, $\wp_{i,j,k,l}$ stand for $\wp_{i,j}(u)$, $\wp_{i,j,k}(u)$, $\wp_{i,j,k,l}(u)$.
$\qede$

\subsection{Proof of Theorem~\ref{T:C44m3} ($(4,4\mFr+3)$-Curves)}
Let $\xi$ be a local coordinate in the vicinity of infinity on the curve \eqref{V44m3Eq}, then
\begin{equation}\label{xyParamC44m3}
 x(\xi) = \xi^{-4},\qquad y(\xi) = \xi^{-4\mFr-3}\bigg(1 + \frac{\lambda_1}{4} \xi 
 + \Big( \frac{\lambda_2 }{4} - \frac{\lambda_1^2}{32} \Big) \xi^2 + O(\xi^4)\bigg).
\end{equation}
The basis of differentials of the first kind has the form
\begin{align*}
 \rmd u(x,y) = \frac{\rmd x}{\partial_y f}  \big(x^{3\mFr+1},\,  y x^{2\mFr},\,  y^2 x^{\mFr-1},\, 
 &\dots,\, y^2,\, x^{2\mFr+1},\, y x^{\mFr},\, \\
 &\dots,\, y,\, x^{\mFr},\, \dots,\,x,\,1 \big)^t. \notag
\end{align*}
and the expansions near infinity are
\begin{align*}
&\rmd u_{4i-1} = y^2 x^{\mFr-i} \frac{\rmd x}{\partial_y f} 
=\xi^{4i-2}  \Big(1 + O(\xi^2)\Big) \rmd \xi,
\quad  i = 1,\, \dots,\, \mFr,\\
&\rmd u_{4i-2} = y x^{2\mFr+1-i} \frac{\rmd x}{\partial_y f} 
=\xi^{4i-3}  \Big(1 - \frac{\lambda_1}{4} \xi + O(\xi^3)\Big) \rmd \xi,\quad i = 1,\, \dots,\, 2\mFr+1, \\
&\rmd u_{4i-3} = x^{3\mFr+2-i} \frac{\rmd x}{\partial_y f} 
=\xi^{4i-4}  \Big(1 - \frac{\lambda_1}{2} \xi 
- \Big(\frac{\lambda_2}{4} - \frac{5\lambda_1^2}{32} \Big) \xi^2 + O(\xi^4)\Big) \rmd \xi,\\
&\qquad i = 1,\, \dots,\, 3\mFr+2.
\end{align*}
By integration with respect to $\xi$ we find expansions for integrals of the first kind:
\begin{subequations}\label{Int1s44m3} 
\begin{align}
& u_{4i-1}(\xi) = \frac{\xi^{4i-1}}{4i-1} + O(\xi^{4i+1}),\ \ i = 1,\, \dots,\, \mFr,\\
& u_{4i-2}(\xi) = \frac{\xi^{4i-2}}{4i-2} 
- \frac{\lambda_1}{4} \frac{\xi^{4i-1}}{4i-1} 
+ O(\xi^{4i+1}),\ i = 1,\, \dots,\, 2\mFr+1, \\
& u_{4i-3}(\xi) = \frac{\xi^{4i-3}}{4i-3} 
- \frac{\lambda_1}{2} \frac{\xi^{4i-2}}{4i-2} 
-  \frac{8 \lambda_2 - 5 \lambda_1^2}{32} \frac{\xi^{4i-1}}{4i-1} 
+ O(\xi^{4i+1}), \\
&\quad i = 1,\, \dots,\, 3\mFr+2. \notag
\end{align}
\end{subequations}

Let the three differentials of the second kind of weights $1$, $2$, $3$ be
\begin{align}
 &\big(\rmd r_1(x,y),\, \rmd r_2(x,y),\, \rmd r_3(x,y) \big) = \frac{\rmd x}{\partial_y f} 
 \big(y^2 x^{\mFr},\, 2 yx^{2\mFr+1},\,  3x^{3\mFr+2}\big). 
\end{align}
and the corresponding expansions near infinity
\begin{align*}
&\rmd r_1(\xi) = \xi^{-2} \Big(1  + O(\xi^2)\Big) \rmd \xi,\\
&\rmd r_2(\xi) = 2 \xi^{-3} \Big(1 - \frac{\lambda_1}{4} \xi + O(\xi^3) \Big) \rmd \xi,\\
&\rmd r_3(\xi) = 3 \xi^{-4}\Big(1 - \frac{\lambda_1}{2} \xi 
- \Big(\frac{\lambda_2}{4} - \frac{5\lambda_1^2}{32} \Big) \xi^2 + O(\xi^4)\Big) \rmd \xi.
\end{align*}
With the help of \eqref{rCond}, we find the differentials of the second kind 
which form an associated system with differentials of the first kind:
\begin{align*}
\begin{pmatrix}
\rmd r_1(x,y) \\
\rmd \widetilde{r}_2(x,y) \\
\rmd \widetilde{r}_3(x,y) 
\end{pmatrix} =  \frac{\rmd x}{\partial_y f}
\begin{pmatrix}
y^2 x^{\mFr} \\
2 yx^{2\mFr+1} + \lambda_1 y^2 x^{\mFr} \\
3x^{3\mFr+2} + 2 \lambda_1 yx^{2\mFr+1} + \lambda_2 y^2 x^{\mFr}
\end{pmatrix}.
\end{align*}
By integration with respect to $\xi$ we obtain
\begin{align*}
& r_1(\xi) = -\xi^{-1}  + O(\xi) + c_1,\\
& \widetilde{r}_2(\xi) = - \xi^{-2}  - \frac{\lambda_1}{2} \xi^{-1} + O(\xi) + c_2,\\
& \widetilde{r}_3(\xi) = - \xi^{-3} - \frac{\lambda_1}{4}  \xi^{-2} 
- \frac{8 \lambda_2 - \lambda_1^2}{32} \xi^{-1} + O(\xi) + c_3.
\end{align*}

The integrals of the first kind \eqref{Int1s44m3} define $\mathcal{A}(\xi)$ from \eqref{rExpr}, namely
\begin{align*}
\mathcal{A}(\xi) = \big(u_1(\xi),\, u_2(\xi),\, &u_3(\xi),\, \dots,\,  u_{4\mFr-1}(\xi),\, u_{4\mFr+1}(\xi),\,
u_{4\mFr+2}(\xi),\, \dots,\,  \\
& u_{8\mFr+2}(\xi),\, u_{8\mFr+5}(\xi),\, \dots,\, u_{12\mFr+1}(\xi),\, u_{12\mFr+5}(\xi)\big)^t.
\end{align*}
Next, we compute residues on the right hand side of \eqref{rExpr}.
Taking into account that $\mathcal{A}(0)=0$, $\mathcal{A}'(0)=(\delta_{i,1})$, 
$\mathcal{A}''(0)=(\delta_{i,2} - \tfrac{1}{2} \lambda_1 \delta_{i,1})$, 
$\mathcal{A}^{(3)}(0)=(2\delta_{i,3}- \tfrac{1}{2} \lambda_1 \delta_{i,2} 
- (\frac{1}{2} \lambda_2 - \tfrac{5}{16} \lambda_1^2) \delta_{i,1})$, 
where $\delta_{i,k}$ denotes the Kronecker delta and $i$ runs from $1$ to $g = 6\mFr+3$, we find
\begin{align*}
\frac{\rmd}{\rmd \xi} \log \sigma\big(u - \mathcal{A}(\xi)\big) &= - \zeta_1(u)
- \big(\zeta_2(u)  - \tfrac{1}{2} \lambda_1\zeta_1(u) + \wp_{1,1}(u)\big) \xi \\
&- \big( \zeta_3(u) - \tfrac{1}{4}\lambda_1  \zeta_2(u) - \big(\tfrac{1}{4}\lambda_2 - \tfrac{5}{32} \lambda_1^2 \big)  \zeta_1(u) \\
&\qquad + \tfrac{3}{2} \wp_{1,2}(u) - \tfrac{3}{4} \lambda_1 \wp_{1,1}(u)  - \tfrac{1}{2} \wp_{1,1,1}(u) \big) \xi^2 + O(\xi^3).
\end{align*}
Here $u = \mathcal{A}(D)$ is the Abel's image of a non-special divisor 
$D = \sum_{k=1}^{6\mFr+3} (x_k,y_k)$.
Thus, \eqref{rExpr} acquires the form
\begin{align*}
& \sum_{k=1}^{6\mFr+3} \begin{pmatrix}
r_1(x_k,\,y_k) \\ \widetilde{r}_2(x_k,\,y_k) \\ \widetilde{r}_3(x_k,\,y_k) 
\end{pmatrix} = - \begin{pmatrix} \zeta_1(u) \\
\zeta_2(u) + \wp_{1,1}(u) \\
\zeta_3(u) + \tfrac{3}{2} \wp_{1,2}(u) - \tfrac{1}{2} \lambda_1 \wp_{1,1}(u) - \tfrac{1}{2} \wp_{1,1,1}(u)
\end{pmatrix}  \equiv \begin{pmatrix} R_1 (u) \\  R_2 (u) \\ R_3 (u) \end{pmatrix}.
\end{align*}
Differentiating the above relations with respect to $x_1$, we find the relations
\eqref{REqsC44m3}. $\qede$

\section{Jacobi inversion problem on pentagonal curves}
There exist four types of pentagonal $(n,s)$-curves:  
$(5,5\mFr+1)$, $(5,5\mFr+2)$, $(5,5\mFr+3)$, and $(5,5\mFr+4)$, where $\mFr$ is a natural number.

\begin{theo}[$(5,5\mFr+1)$-Curves]\label{T:C55m1}
Let a non-degenerate $(5,5\mFr+1)$-curve of genus $g=10\mFr$ be defined by
\begin{align}\label{V55m1Eq}
-y^5 + x^{5\mFr+1} + & y^3 \sum_{i=0}^{2\mFr} \lambda_{5i+2} x^{2\mFr-i} 
+ y^2 \sum_{i=0}^{3\mFr} \lambda_{5i+3} x^{3\mFr-i} \\
&+ y \sum_{i=0}^{4\mFr} \lambda_{5i+4} x^{4\mFr-i} 
+ \sum_{i=1}^{5\mFr } \lambda_{5i+5} x^{5\mFr-i} = 0. \notag
\end{align}
Let $u = \mathcal{A}(D)$ be the Abel image of  a non-special degree $g$ divisor  $D$
on the curve. Then $D$ is uniquely defined by the system of equations 
\begin{subequations}\label{REqsC55m1}
\begin{gather}
\mathcal{R}_{20\mFr + l}(x,y;u)=0,\qquad l=0,1,2,3
\end{gather}
with four entire rational functions of the weights $2g + l =20\mFr + l$, $l=0,1,2,3$:
\begin{align}
\mathcal{R}_{20\mFr}(x,y;u) 
&= x^{4\mFr} - \sum_{i=1}^{10\mFr} \wp_{1,\wFr_{i}}(u) \M_{-\wFr_{i}}, \\
\mathcal{R}_{20\mFr+1}(x,y;u) 
&= 2 y x^{3\mFr}  - \sum_{i=1}^{10\mFr} \big( \wp_{2,\wFr_{i}}(u) - \wp_{1,1,\wFr_{i}}(u) \big)  \M_{-\wFr_{i}}, \\
\mathcal{R}_{20\mFr+2}(x,y;u) 
 &= 3 y^2 x^{2\mFr} - \lambda_2 x^{4\mFr} \\
&\hspace{-10mm} - \sum_{i=1}^{10\mFr} \big(\wp_{3,\wFr_{i}}(u) - \tfrac{3}{2} \wp_{1,2,\wFr_{i}}(u)
 + \tfrac{1}{2} \wp_{1,1,1,\wFr_{i}}(u) \big)  \M_{-\wFr_{i}}, \notag \\
\mathcal{R}_{20\mFr+3}(x,y;u) 
 &= 4 y^3 x^{\mFr} - 2 \lambda_2 y x^{3\mFr} - \lambda_3 x^{4\mFr} \\
&\hspace{-15mm} - \sum_{i=1}^{10\mFr} \big(\wp_{4,\wFr_{i}}(u) - \tfrac{1}{2} \wp_{2,2,\wFr_{i}}(u)
- \tfrac{4}{3} \wp_{1,3,\wFr_{i}}(u) - \tfrac{1}{3} \lambda_2 \wp_{1,1,\wFr_{i}}(u) \notag \\
& + \wp_{1,1,2,\wFr_{i}}(u) - \tfrac{1}{6}  \wp_{1,1,1,1,\wFr_{i}}(u) \big)  \M_{-\wFr_{i}}, \notag  
 \end{align}
\end{subequations}
where
\begin{align*}
&\M_{-(5i-4)} = y^3 x^{\mFr-i},\quad  i=1,\, \dots,\, \mFr, \\
&\M_{-(5i-3)} = y^2 x^{2\mFr-i},\quad  i=1,\, \dots,\, 2\mFr, \\
&\M_{-(5i-2)} = y x^{3\mFr-i},\quad  i=1,\, \dots,\, 3\mFr, \\
&\M_{-(5i-1)} = x^{4\mFr-i},\quad  i=1,\, \dots,\, 4\mFr.
\end{align*}
\end{theo}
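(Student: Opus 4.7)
The plan is to carry out the same three-step machinery used in the proofs of Theorems~\ref{T:C33m1}, \ref{T:C33m2}, \ref{T:C44m1}, and \ref{T:C44m3}, now with $n=5$ and the four second-kind differentials of Sat\={o} weights $1,2,3,4$. First I would introduce the local parameter $\xi$ at infinity with $x(\xi)=\xi^{-5}$ and expand $y(\xi)=\xi^{-5\mFr-1}\bigl(1+\tfrac{\lambda_2}{5}\xi^2+\tfrac{\lambda_3}{5}\xi^3+\tfrac{\lambda_4}{5}\xi^4+O(\xi^5)\bigr)$ by iterating the curve equation \eqref{V55m1Eq}. Note that no $\lambda_1$ term appears (there is no $\lambda$ of weight $1$ in the curve equation), which is why the statement of the theorem contains no $\lambda_1$.

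Second, I would list the first-kind differentials with numerators $\M_{-\wFr_i}$ as in the statement, compute their $\xi$-expansions, and integrate termwise to obtain the components of $\mathcal{A}(\xi)$ with leading terms $\xi^{\wFr_i}/\wFr_i$ plus $\lambda$-corrections. Then, following the recipe of the trigonal and tetragonal cases, I would start from the bare candidates $\rmd r_\ell=\ell\mathcal{M}_\ell\,\rmd x/\partial_y f$ for $\ell=1,2,3,4$ (with $\mathcal{M}_1=x^{4\mFr}$, $\mathcal{M}_2=yx^{3\mFr}$, $\mathcal{M}_3=y^2x^{2\mFr}$, $\mathcal{M}_4=y^3x^{\mFr}$) and check the normalisation \eqref{rCond}. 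The first two already satisfy it; for $\ell=3,4$ I would add holomorphic $\lambda$-corrections so that $\rmd\widetilde r_3=(3y^2x^{2\mFr}-\lambda_2 x^{4\mFr})\rmd x/\partial_y f$ and $\rmd\widetilde r_4=(4y^3x^{\mFr}-2\lambda_2\,yx^{3\mFr}-\lambda_3 x^{4\mFr})\rmd x/\partial_y f$, which is the unique adjustment (in the spirit of Remark~\ref{R:Dif2}) making them part of an associated system. I would then integrate to obtain the expansions of $r_1,\widetilde r_2,\widetilde r_3,\widetilde r_4$, each of the form $-\xi^{-\ell}+(\text{lower-order }\lambda\text{-tail})+c_\ell$.

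Third, I would plug these expansions into the residue identity \eqref{rExpr}. Since $\mathcal{A}(0)=0$ and $\mathcal{A}'(0)=(\delta_{i,1})$, the Taylor expansion of $\log\sigma(u-\mathcal{A}(\xi))$ must be carried out through order $\xi^4$ to extract the $\xi^3$ coefficient of $\tfrac{\rmd}{\rmd\xi}\log\sigma$. This gives
\[
\tfrac{\rmd}{\rmd\xi}\log\sigma(u-\mathcal{A}(\xi))=-\zeta_1-\bigl(\zeta_2+\wp_{1,1}\bigr)\xi-\bigl(\zeta_3+\tfrac14\lambda_2\zeta_1+\tfrac32\wp_{1,2}-\tfrac12\wp_{1,1,1}\bigr)\xi^2-R\,\xi^3+O(\xi^4),
\]
where $R$ is a linear combination of $\zeta_4$, $\zeta_2$, $\zeta_1$, $\wp_{1,3}$, $\wp_{2,2}$, $\wp_{1,1,2}$, $\wp_{1,1,1,1}$ and $\wp_{1,1}$ with coefficients depending on $\lambda_2,\lambda_3$, obtained from $\mathcal{A}''(0)$, $\mathcal{A}^{(3)}(0)$, $\mathcal{A}^{(4)}(0)$. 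Taking residues, and using that the $\lambda$-tails in $\widetilde r_\ell$ match the $\lambda$-dependent pieces in $R$ so that only the $\wp$- and $\zeta$-terms survive, yields the four identities
$\sum_k r_\ell(x_k,y_k)=R_\ell(u)$ for $\ell=1,2,3,4$.

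Finally, I would differentiate each identity with respect to $x_1$, using $\rmd r_\ell/\rmd x=\widetilde{\mathcal{M}}_\ell/\partial_y f$ and $\rmd u_{\wFr_i}/\rmd x=\mathcal{M}_{-\wFr_i}/\partial_y f$, and multiply through by $\partial_y f(x_1,y_1)$. This produces exactly \eqref{REqsC55m1} with $A_{\ell,\wFr_i}(u)=\partial_{u_{\wFr_i}}R_\ell(u)$, and Theorem~\ref{T1} then guarantees that the resulting system has $D$ as its unique solution. The main obstacle I expect is the bookkeeping for $\ell=4$: one must verify, by matching order-by-order in $\xi$ the $\lambda_2,\lambda_3$-corrections inside $\widetilde r_4$ against the $\lambda$-dependent contributions arising from $\mathcal{A}^{(3)}(0)$ and $\mathcal{A}^{(4)}(0)$ in the $\log\sigma$ expansion, that the surviving combination is exactly $\wp_{4,\wFr_i}-\tfrac12\wp_{2,2,\wFr_i}-\tfrac43\wp_{1,3,\wFr_i}-\tfrac13\lambda_2\wp_{1,1,\wFr_i}+\wp_{1,1,2,\wFr_i}-\tfrac16\wp_{1,1,1,1,\wFr_i}$, with no residual $\zeta$'s. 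Everything else is a straightforward extension of the $n=3,4$ computations.
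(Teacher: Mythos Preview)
Your plan is essentially the paper's own proof: same local parameter, same bare second-kind differentials $\rmd r_\ell=\ell\mathcal{M}_\ell\,\rmd x/\partial_y f$, same corrected $\rmd\widetilde r_3,\rmd\widetilde r_4$, same residue identity \eqref{rExpr} expanded to the $\xi^3$ coefficient, then differentiation in $x_1$. The only slips are bookkeeping: in the $\xi^2$ coefficient of $\tfrac{\rmd}{\rmd\xi}\log\sigma$ the factor should be $\tfrac{2}{5}\lambda_2\zeta_1$ (not $\tfrac14\lambda_2$, which is the tetragonal value), and the $\xi^4$ term of $y(\xi)$ is $\bigl(\tfrac{\lambda_4}{5}+\tfrac{\lambda_2^2}{25}\bigr)\xi^4$; also $r_2$ needs no tilde.
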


\begin{theo}[$(5,5\mFr+2)$-Curves]\label{T:C55m2}
Let a non-degenerate $(5,5\mFr+2)$-curve of genus $g=10\mFr+2$ be defined by
\begin{align}\label{V55m2Eq}
-y^5 + x^{5\mFr+2} + & y^3 \sum_{i=0}^{2\mFr} \lambda_{5i+4} x^{2\mFr-i} 
+ y^2 \sum_{i=0}^{3\mFr+1} \lambda_{5i+1} x^{3\mFr+1-i} \\
&+ y \sum_{i=0}^{4\mFr+1} \lambda_{5i+3} x^{4\mFr+1-i} 
+ \sum_{i=1}^{5\mFr +1} \lambda_{5i+5} x^{5\mFr+1-i} = 0. \notag
\end{align}
Let $u = \mathcal{A}(D)$ be the Abel image of  a non-special degree $g$ divisor  $D$
on the curve. Then $D$ is uniquely defined by the system of equations 
\begin{subequations}\label{REqsC55m2}
\begin{gather}
\mathcal{R}_{20\mFr+4+ l}(x,y;u)=0,\qquad l=0,1,2,3
\end{gather}
with four entire rational functions of the weights $2g + l =20\mFr +4 + l$, $l=0,1,2,3$:
\begin{align}
\mathcal{R}_{20\mFr+4}(x,y;u) 
&= y^2 x^{2\mFr} - \sum_{i=1}^{10\mFr+2} \wp_{1,\wFr_{i}}(u) \M_{-\wFr_{i}}, \\
\mathcal{R}_{20\mFr+5}(x,y;u) 
&= 2 x^{4\mFr+1}  + \lambda_1 y^2 x^{2\mFr} \\
&\quad - \sum_{i=1}^{10\mFr+2} 
\big( \wp_{2,\wFr_{i}}(u) - \wp_{1,1,\wFr_{i}}(u) \big)  \M_{-\wFr_{i}}, \notag \\
\mathcal{R}_{20\mFr+6}(x,y;u) 
 &= 3 y^3 x^{\mFr} - \lambda_1 x^{4\mFr+1}\\
&\hspace{-20mm}  - \sum_{i=1}^{10\mFr+2} \big(\wp_{3,\wFr_{i}}(u) 
- \tfrac{3}{2} \wp_{1,2,\wFr_{i}}(u) + \tfrac{1}{2} \lambda_1 \wp_{1,1,\wFr_{i}}(u)
 + \tfrac{1}{2} \wp_{1,1,1,\wFr_{i}}(u) \big)  \M_{-\wFr_{i}}, \notag \\
\mathcal{R}_{20\mFr+7}(x,y;u) 
 &= 4 y x^{3\mFr+1} + 2 \lambda_1 y^3 x^{\mFr} + 2 \lambda_3 y^2 x^{2\mFr} \\
&\hspace{-20mm} - \sum_{i=1}^{10\mFr+2} \big(\wp_{4,\wFr_{i}}(u) - \tfrac{1}{2} \wp_{2,2,\wFr_{i}}(u)
- \tfrac{4}{3} \wp_{1,3,\wFr_{i}}(u) - \tfrac{2}{3} \lambda_1 \wp_{1,2,\wFr_{i}}(u) \notag \\
&  + \tfrac{1}{6} \lambda_1^2 \wp_{1,1,\wFr_{i}}(u)  + \wp_{1,1,2,\wFr_{i}}(u) 
- \tfrac{1}{6}  \wp_{1,1,1,1,\wFr_{i}}(u) \big)  \M_{-\wFr_{i}}, \notag  
 \end{align}
\end{subequations}
where
\begin{align*}
&\M_{-(5i-3)} = y^3 x^{\mFr-i},\quad  i=1,\, \dots,\, \mFr, \\
&\M_{-(5i-1)} = y^2 x^{2\mFr-i},\quad  i=1,\, \dots,\, 2\mFr, \\
&\M_{-(5i-4)} = y x^{3\mFr+1-i},\quad  i=1,\, \dots,\, 3\mFr+1, \\
&\M_{-(5i-2)} = x^{4\mFr+1-i},\quad  i=1,\, \dots,\, 4\mFr+1.
\end{align*}
\end{theo}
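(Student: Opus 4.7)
The plan is to mimic the proofs of Theorems~\ref{T:C44m1} and~\ref{T:C44m3}, which already contain every ingredient needed in the pentagonal setting. The only structural change caused by $n=5$ is that one now needs \emph{four} entire rational functions of consecutive Sat\={o} weights $2g, 2g+1, 2g+2, 2g+3$, and correspondingly four differentials of the second kind of weights $\ell=1,2,3,4$. Because the curve \eqref{V55m2Eq} carries a nonzero parameter $\lambda_1$ multiplying $y^2 x^{3\mFr+1}$, the local expansion of $y$ near infinity will be of the same flavour as in the $(3,3\mFr+2)$ and $(4,4\mFr+3)$ cases.

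First I introduce a local parameter $\xi$ at infinity: $x(\xi)=\xi^{-5}$, and obtain $y(\xi)=\xi^{-5\mFr-2}\bigl(1+\tfrac{\lambda_1}{5}\xi+O(\xi^2)\bigr)$ by iterative substitution in \eqref{V55m2Eq}; the expansion needs to be carried to order $\xi^{-5\mFr+2}$ (four terms beyond the leading one). Using the monomials $\mathcal{M}_{-\wFr_i}$ listed in the statement, I write down the basis $\rmd u_{\wFr_i}=\mathcal{M}_{-\wFr_i}\rmd x/\partial_y f$, expand each in $\xi$, and integrate term-by-term to obtain Laurent series $u_{\wFr_i}(\xi)$, which assemble into the map $\mathcal{A}(\xi)$.

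Next I take the simplest second-kind candidates $\rmd r_\ell=\ell\mathcal{M}_\ell\,\rmd x/\partial_y f$ with
\begin{gather*}
\mathcal{M}_1=y^2 x^{2\mFr},\quad \mathcal{M}_2=x^{4\mFr+1},\quad \mathcal{M}_3=y^3 x^{\mFr},\quad \mathcal{M}_4=y x^{3\mFr+1},
\end{gather*}
and check which of them satisfy the normalization \eqref{rCond} by direct computation. Those that fail are replaced by combinations $\rmd\widetilde{r}_\ell$, obtained by adjoining terms $\mathcal{M}_\kappa\,\rmd x/\partial_y f$ with $0<\kappa<\ell$ whose coefficients are uniquely fixed by the requirement that $(\rmd u,\rmd\widetilde{r})$ form an associated system, cf.\ Remark~\ref{R:Dif2}. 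This step is where the correction terms $\lambda_1 y^2 x^{2\mFr}$ in $\mathcal{R}_{20\mFr+5}$, $\lambda_1 x^{4\mFr+1}$ in $\mathcal{R}_{20\mFr+6}$, and $\lambda_1 y^3 x^{\mFr}+\lambda_3 y^2 x^{2\mFr}$ in $\mathcal{R}_{20\mFr+7}$ enter the picture, and the computation must reproduce exactly the coefficients listed in \eqref{REqsC55m2}. Integration in $\xi$ then produces Laurent series for $r_1,\widetilde{r}_2,\widetilde{r}_3,\widetilde{r}_4$ with regularization constants $c_1,\dots,c_4$ that drop out in the sequel.

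Finally, I read off $\mathcal{A}^{(k)}(0)$ for $k=1,2,3,4$ from the $u_{\wFr_i}(\xi)$ series and expand
\begin{equation*}
\frac{\rmd}{\rmd \xi}\log\sigma\bigl(u-\mathcal{A}(\xi)\bigr)
= -\zeta_1(u) - \bigl(\zeta_2(u)+\wp_{1,1}(u)\bigr)\xi + O(\xi^2)
\end{equation*}
through order $\xi^3$, the higher coefficients involving $\zeta_3,\zeta_4$ together with $\wp_{1,2},\wp_{1,3},\wp_{2,2},\wp_{1,1,1},\wp_{1,1,2},\wp_{1,1,1,1}$ and linear combinations weighted by $\lambda_1,\lambda_1^2,\lambda_2,\lambda_3$. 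Applying the residue formula \eqref{rExpr} to the vector $(r_1,\widetilde{r}_2,\widetilde{r}_3,\widetilde{r}_4)^t$ gives a four-component identity for $\sum_{k=1}^{g}\bigl(r_\ell(x_k,y_k)\bigr)_\ell$. Differentiating this identity with respect to $x_1$ and multiplying by $\partial_{y_1}f(x_1,y_1)$ extracts the four relations \eqref{REqsC55m2}. The main obstacle is purely combinatorial bookkeeping: one must track the $\lambda$-corrections consistently through the $y(\xi)$-expansion, the modification $\rmd r_\ell\mapsto \rmd\widetilde{r}_\ell$, and the $\sigma$-series, and verify that they combine to yield the specific coefficients of $\wp_{\ldots}$ in $\mathcal{R}_{20\mFr+7}$, in particular the term $\tfrac{1}{6}\lambda_1^2\wp_{1,1,\wFr_i}(u)$ which requires matching contributions from three distinct sources.
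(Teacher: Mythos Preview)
Your proposal is correct and follows essentially the same route as the paper's own proof: the paper also parametrizes by $x=\xi^{-5}$, expands $y(\xi)$ and the first-kind integrals, adjusts $\rmd r_2,\rmd r_3,\rmd r_4$ via \eqref{rCond} to the associated system (producing exactly the $\lambda_1$- and $\lambda_3$-correction terms you anticipate), computes the residues in \eqref{rExpr} through order $\xi^3$, and then differentiates in $x_1$ to obtain \eqref{REqsC55m2}. One small slip: the curve \eqref{V55m2Eq} has no parameter $\lambda_2$, so only $\lambda_1,\lambda_1^2,\lambda_1^3,\lambda_3$ appear in the intermediate expansions.
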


\begin{theo}[$(5,5\mFr+3)$-Curves]\label{T:C55m3}
Let a non-degenerate $(5,5\mFr+3)$-curve of genus $g=10\mFr+4$ be defined by
\begin{align}\label{V55m3Eq}
-y^5 + x^{5\mFr+3} + & y^3 \sum_{i=0}^{2\mFr+1} \lambda_{5i+1} x^{2\mFr+1-i} 
+ y^2 \sum_{i=0}^{3\mFr+1} \lambda_{5i+4} x^{3\mFr+1-i} \\
&+ y \sum_{i=0}^{4\mFr+2} \lambda_{5i+2} x^{4\mFr+2-i} 
+ \sum_{i=1}^{5\mFr +2} \lambda_{5i+5} x^{5\mFr+2-i} = 0. \notag
\end{align}
Let $u = \mathcal{A}(D)$ be the Abel image of  a non-special degree $g$ divisor  $D$
on the curve. Then $D$ is uniquely defined by the system of equations 
\begin{subequations}\label{REqsC55m3}
\begin{gather}
\mathcal{R}_{20\mFr+8 + l}(x,y;u)=0,\qquad l=0,1,2,3
\end{gather}
with four entire rational functions of the weights $2g + l =20\mFr + 8 + l$, $l=0,1,2,3$:
\begin{align}
&\mathcal{R}_{20\mFr+8}(x,y;u) 
= y x^{3\mFr+1} - \sum_{i=1}^{10\mFr+4} \wp_{1,\wFr_{i}}(u) \M_{-\wFr_{i}}, \\
&\mathcal{R}_{20\mFr+9}(x,y;u) 
= 2 y^3 x^{\mFr}  - \lambda_1 y x^{3\mFr+1} \\
&\phantom{\mathcal{R}_{20\mFr+9}(x,y;u)} \quad 
- \sum_{i=1}^{10\mFr+4} \big( \wp_{2,\wFr_{i}}(u) - \wp_{1,1,\wFr_{i}}(u) \big)  \M_{-\wFr_{i}}, \notag \\
&\mathcal{R}_{20\mFr+10}(x,y;u) 
= 3 x^{4\mFr+2} + \lambda_1 y^3 x^{\mFr} + 2 \lambda_2 y x^{3\mFr+1} \\
&\quad - \sum_{i=1}^{10\mFr+4} \big(\wp_{3,\wFr_{i}}(u) - \tfrac{3}{2} \wp_{1,2,\wFr_{i}}(u)
- \tfrac{1}{2} \lambda_1 \wp_{1,1,\wFr_{i}}(u)
 + \tfrac{1}{2} \wp_{1,1,1,\wFr_{i}}(u) \big)  \M_{-\wFr_{i}}, \notag \\
&\mathcal{R}_{20\mFr+11}(x,y;u) 
= 4 y^2 x^{2\mFr+1} - 2 \lambda_1 x^{4\mFr+2} + 2 \lambda_2 y^3 x^{\mFr} - \lambda_1 \lambda_2 y x^{3\mFr+1} \\
&\quad - \sum_{i=1}^{10\mFr+4} \big(\wp_{4,\wFr_{i}}(u) - \tfrac{1}{2} \wp_{2,2,\wFr_{i}}(u)
- \tfrac{4}{3} \wp_{1,3,\wFr_{i}}(u) + \tfrac{2}{3} \lambda_1 \wp_{1,2,\wFr_{i}}(u) \notag \\
&\quad  - \tfrac{1}{3}  (\lambda_2 - \tfrac{1}{2} \lambda_1^2 ) \wp_{1,1,\wFr_{i}}(u) 
+ \wp_{1,1,2,\wFr_{i}}(u) - \tfrac{1}{6}  \wp_{1,1,1,1,\wFr_{i}}(u) \big)  \M_{-\wFr_{i}}, \notag  
 \end{align}
\end{subequations}
where
\begin{align*}
&\M_{-(5i-2)} = y^3 x^{\mFr-i},\quad  i=1,\, \dots,\, \mFr, \\
&\M_{-(5i-4)} = y^2 x^{2\mFr+1-i},\quad  i=1,\, \dots,\, 2\mFr+1, \\
&\M_{-(5i-1)} = y x^{3\mFr+1-i},\quad  i=1,\, \dots,\, 3\mFr+1, \\
&\M_{-(5i-3)} = x^{4\mFr+2-i},\quad  i=1,\, \dots,\, 4\mFr+2.
\end{align*}
\end{theo}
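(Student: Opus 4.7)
The plan is to follow the strategy used in the proofs of Theorems~\ref{T:C44m1} and~\ref{T:C44m3}, extended to accommodate the $n-1=4$ differentials of the second kind needed for a pentagonal curve. First I fix a local parameter at infinity by $x(\xi)=\xi^{-5}$ and, substituting into~\eqref{V55m3Eq}, derive
$$y(\xi)=\xi^{-5\mFr-3}\Bigl(1+\tfrac{\lambda_1}{5}\xi+\cdots\Bigr)$$
to enough order that all residues computed at the final step are exact through order~$\xi^3$. Using $\partial_y f$ evaluated on this expansion, I expand each first-kind differential $\rmd u_{\wFr_i}$ built from the four blocks of monomials $y^{3-j}x^{\ast}$ listed in the theorem, integrate term-by-term to obtain $u_{\wFr_i}(\xi)$, and thus produce $\mathcal{A}(\xi)$ together with its first four derivatives at $\xi=0$ (which will involve Kronecker deltas corrected by polynomials in $\lambda_1,\lambda_2$).

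Next I take the four candidate second-kind differentials of weights $\ell=1,2,3,4$ with leading monomials $\M_1=yx^{3\mFr+1}$, $\M_2=y^3x^{\mFr}$, $\M_3=x^{4\mFr+2}$, $\M_4=y^2x^{2\mFr+1}$ of weights $2g-1+\ell$, forming $\rmd r_\ell=\ell\M_\ell\,\rmd x/\partial_y f$. Since $\mathcal{A}(\xi)$ carries subleading terms proportional to $\lambda_1$ and $\lambda_2$, the initial $\rmd r_\ell$ do not all satisfy~\eqref{rCond}; I replace them by associated differentials $\rmd\widetilde{r}_\ell$ obtained by subtracting the unique combinations of lower-weight monomials (with polynomial coefficients in $\lambda_1,\lambda_2$) dictated by~\eqref{rCond}. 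These corrections are precisely the algebraic tails $-\lambda_1 y x^{3\mFr+1}$, $\lambda_1 y^3 x^{\mFr}+2\lambda_2 y x^{3\mFr+1}$, and $-2\lambda_1 x^{4\mFr+2}+2\lambda_2 y^3 x^{\mFr}-\lambda_1\lambda_2 y x^{3\mFr+1}$ appearing on the left-hand sides of~\eqref{REqsC55m3}. Integrating in $\xi$ yields the principal parts of the associated integrals $\widetilde r_\ell(\xi)$ up to regularization constants.

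With $u=\mathcal{A}(D)$ for a non-special divisor $D=\sum_{k=1}^{g}(x_k,y_k)$, I substitute into~\eqref{rExpr} and Taylor expand $\frac{\rmd}{\rmd\xi}\log\sigma\bigl(u-\mathcal{A}(\xi)\bigr)$ through order $\xi^2$, using the explicit values of $\mathcal{A}^{(k)}(0)$ determined above. Matching residues gives
$$\sum_{k=1}^{g}\widetilde r_\ell(x_k,y_k)=R_\ell(u),\qquad \ell=1,2,3,4,$$
where each $R_\ell(u)$ is a linear combination of zeta functions $\zeta_{\wFr_j}(u)$ and abelian functions $\wp_{i,j},\wp_{i,j,k},\wp_{i,j,k,l}$ whose coefficients, after collecting $\lambda_1,\lambda_2$ cross-terms, match exactly those on the right-hand side of~\eqref{REqsC55m3}. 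Differentiating each relation with respect to $x_1$, multiplying by $\partial_{y_1}f(x_1,y_1)$, and using $\rmd u_{\wFr}/\rmd x=\M_{-\wFr}/\partial_y f$ together with Theorem~\ref{T2} (identifying $A_{\ell,\wFr_i}(u)=\partial_{u_{\wFr_i}}R_\ell(u)$) converts each relation into the corresponding $\mathcal{R}_{20\mFr+8+l}(x,y;u)=0$.

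The main obstacle I expect is the bookkeeping of cross-terms in the expansions: both $\lambda_1$ and $\lambda_2$ now enter through order $\xi^3$, so $\mathcal{A}^{(3)}(0)$ and $\mathcal{A}^{(4)}(0)$ carry mixed contributions that must be tracked through the sigma-function expansion to reproduce the precise coefficient $-\tfrac{1}{3}(\lambda_2-\tfrac{1}{2}\lambda_1^2)$ of $\wp_{1,1,\wFr_i}$ in $\mathcal{R}_{20\mFr+11}$ and the $\lambda_1\lambda_2 yx^{3\mFr+1}$ term on its algebraic side. Once the correct modifications $\rmd\widetilde r_3,\rmd\widetilde r_4$ are pinned down from~\eqref{rCond}, the remaining steps are a routine extension of the tetragonal arguments, as allowed by Remark~\ref{R:Dif2}.
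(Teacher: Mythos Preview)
Your proposal is correct and follows essentially the same approach as the paper's proof: local parametrization at infinity, expansion of first-kind integrals to obtain $\mathcal{A}^{(k)}(0)$, adjustment of the naive second-kind differentials via~\eqref{rCond} to the associated system, residue matching in~\eqref{rExpr}, and differentiation in $x_1$. One minor slip: you write that the Taylor expansion of $\tfrac{\rmd}{\rmd\xi}\log\sigma(u-\mathcal{A}(\xi))$ is needed through order~$\xi^2$, but since $\widetilde r_4(\xi)\sim -\xi^{-4}$ the residue picks up the $\xi^3$ coefficient, so the expansion must go through~$\xi^3$ (as you correctly anticipate when you discuss the $\lambda_1,\lambda_2$ cross-terms and the coefficient $-\tfrac{1}{3}(\lambda_2-\tfrac{1}{2}\lambda_1^2)$).
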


\begin{theo}[$(5,5\mFr+4)$-Curves]\label{T:C55m4}
Let a non-degenerate $(5,5\mFr+4)$-curve of genus $g=10\mFr+6$ be defined by
\begin{align}\label{V55m4Eq}
-y^5 + x^{5\mFr+4} + & y^3 \sum_{i=0}^{2\mFr+1} \lambda_{5i+3} x^{2\mFr+1-i} 
+ y^2 \sum_{i=0}^{3\mFr+2} \lambda_{5i+2} x^{3\mFr+2-i} \\
&+ y \sum_{i=0}^{4\mFr+3} \lambda_{5i+1} x^{4\mFr+3-i} 
+ \sum_{i=1}^{5\mFr +3} \lambda_{5i+5} x^{5\mFr+3-i} = 0. \notag
\end{align}
Let $u = \mathcal{A}(D)$ be the Abel image of  a non-special degree $g$ divisor  $D$
on the curve. Then $D$ is uniquely defined by the system of equations 
\begin{subequations}\label{REqsC55m4}
\begin{gather}
\mathcal{R}_{20\mFr+12 + l}(x,y;u)=0,\qquad l=0,1,2,3
\end{gather}
with four entire rational functions of the weights $2g + l =20\mFr +12 + l$, $l=0,1,2,3$:
\begin{align}
\mathcal{R}_{20\mFr+12}(x,y;u) 
&= y^3 x^{\mFr} - \sum_{i=1}^{10\mFr+6} \wp_{1,\wFr_{i}}(u) \M_{-\wFr_{i}}, \\
\mathcal{R}_{20\mFr+13}(x,y;u) 
&= 2 y^2 x^{2\mFr+1}  + \lambda_1 y^3 x^{\mFr} \\
&\quad - \sum_{i=1}^{10\mFr+6} \big( \wp_{2,\wFr_{i}}(u) - \wp_{1,1,\wFr_{i}}(u) \big)  \M_{-\wFr_{i}}, \notag \\
\mathcal{R}_{20\mFr+14}(x,y;u) 
 &= 3 y x^{3\mFr+2} + 2 \lambda_1 y^2 x^{2\mFr+1} + \lambda_2 y^3 x^{\mFr} \\
&\hspace{-25mm} - \sum_{i=1}^{10\mFr+6} \big(\wp_{3,\wFr_{i}}(u) - \tfrac{3}{2} \wp_{1,2,\wFr_{i}}(u)
+ \tfrac{1}{2} \lambda_1 \wp_{1,1,\wFr_{i}}(u)
 + \tfrac{1}{2} \wp_{1,1,1,\wFr_{i}}(u) \big)  \M_{-\wFr_{i}}, \notag \\
\mathcal{R}_{20\mFr+15}(x,y;u) 
 &= 4 x^{4\mFr+3} + 3 \lambda_1 y x^{3\mFr+2} + 2 \lambda_2 y^2 x^{2\mFr+1} + \lambda_3 y^3 x^{\mFr}  \\
&\hspace{-25mm} - \sum_{i=1}^{10\mFr+6} \big(\wp_{4,\wFr_{i}}(u) - \tfrac{1}{2} \wp_{2,2,\wFr_{i}}(u)
- \tfrac{4}{3} \wp_{1,3,\wFr_{i}}(u) + \tfrac{5}{6} \lambda_1 \wp_{1,2,\wFr_{i}}(u) \notag \\
& \hspace{-10mm}  + \tfrac{1}{3} (\lambda_2 -  \lambda_1^2 ) \wp_{1,1,\wFr_{i}}(u)
+ \wp_{1,1,2,\wFr_{i}}(u) - \tfrac{1}{2} \lambda_1 \wp_{1,1,1,\wFr_{i}}(u) \notag  \\
& \hspace{35mm}   - \tfrac{1}{6}  \wp_{1,1,1,1,\wFr_{i}}(u) \big)  \M_{-\wFr_{i}}, \notag  
 \end{align}
\end{subequations}
where
\begin{align*}
&\M_{-(5i-1)} = y^3 x^{\mFr-i},\quad  i=1,\, \dots,\, \mFr, \\
&\M_{-(5i-2)} = y^2 x^{2\mFr+1-i},\quad  i=1,\, \dots,\, 2\mFr+1, \\
&\M_{-(5i-3)} = y x^{3\mFr+2-i},\quad  i=1,\, \dots,\, 3\mFr+2, \\
&\M_{-(5i-4)} = x^{4\mFr+3-i},\quad  i=1,\, \dots,\, 4\mFr+3.
\end{align*}
\end{theo}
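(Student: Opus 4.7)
The plan is to follow the template of the proofs of Theorems~\ref{T:C44m1} and~\ref{T:C44m3} adapted to the $n=5$ setting. First I would introduce a local parameter $\xi$ at infinity on the curve~\eqref{V55m4Eq} so that $x(\xi) = \xi^{-5}$ and $y(\xi) = \xi^{-(5\mFr+4)}\big(1 + \tfrac{\lambda_1}{5}\xi + O(\xi^2)\big)$, expanding $y(\xi)$ through order $O(\xi^4)$ so that all subsequent series are controlled to the order needed in the residue computation. From this I would read off the expansions of the first-kind differentials $\rmd u_{\wFr_i}$ whose numerators are the listed monomials $\M_{-(5i-4)}, \M_{-(5i-3)}, \M_{-(5i-2)}, \M_{-(5i-1)}$, and integrate termwise to get series $u_{\wFr_i}(\xi)$; the $\lambda$-corrections in $y(\xi)$ propagate into the $u_{5i-4}(\xi)$ components (those with numerator $x^{4\mFr+3-i}$) and into the higher-order terms of $u_{5i-3}(\xi)$.

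For the second-kind data I would start from the naive choice $\rmd r_\ell = \ell\M_\ell\,\rmd x/\partial_y f$ for $\ell=1,\ldots,4$, i.e.\ numerators $y^3 x^{\mFr}$, $2y^2 x^{2\mFr+1}$, $3y x^{3\mFr+2}$, $4x^{4\mFr+3}$, and correct each by adding lower-weight numerators with coefficients polynomial in $\lambda_1,\lambda_2,\lambda_3$ so as to enforce condition~\eqref{rCond}. By Remark~\ref{R:Dif2}, the resulting $\rmd r_1, \rmd\widetilde{r}_2, \rmd\widetilde{r}_3, \rmd\widetilde{r}_4$ form an associated system with the first-kind differentials, and the corrections produce exactly the $\lambda$-terms $\lambda_1 y^3 x^\mFr$ in $\mathcal{R}_{20\mFr+13}$, $2\lambda_1 y^2 x^{2\mFr+1}+\lambda_2 y^3 x^\mFr$ in $\mathcal{R}_{20\mFr+14}$, and $3\lambda_1 y x^{3\mFr+2}+2\lambda_2 y^2 x^{2\mFr+1}+\lambda_3 y^3 x^\mFr$ in $\mathcal{R}_{20\mFr+15}$.

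With the expansions in place I would assemble $\mathcal{A}(\xi)$ along with its Taylor derivatives $\mathcal{A}^{(k)}(0)$ for $k = 1, \ldots, 4$, then expand $\tfrac{\rmd}{\rmd\xi}\log\sigma\big(u - \mathcal{A}(\xi)\big)$ through order $\xi^3$ using the chain rule and the differentiation identities $\partial_{u_j}\zeta_i = -\wp_{i,j}$, $\partial_{u_k}\wp_{i,j}=\wp_{i,j,k}$, $\partial_{u_l}\wp_{i,j,k}=\wp_{i,j,k,l}$ following from~\eqref{wpDefs}. Substituting this together with the expansions of $r_\ell(\xi)$ into~\eqref{rExpr} yields four identities $\sum_{k=1}^{10\mFr+6} r_\ell(x_k,y_k) = R_\ell(u)$, where $R_\ell(u)$ is an explicit combination of $\zeta$- and $\wp$-functions determined by the coefficients of $\mathcal{A}^{(k)}(0)$ and of $r_\ell(\xi)$. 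Differentiating these with respect to $x_1$, multiplying by $\partial_{y_1}f(x_1,y_1)$, and invoking Theorem~\ref{T2} to identify $A_{\ell,\wFr_i}(u) = \partial_{u_{\wFr_i}}R_\ell(u)$ as the claimed $\wp$-combinations gives exactly the system~\eqref{REqsC55m4}.

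The main obstacle is the fourth identity. It requires $y(\xi)$ known through $O(\xi^4)$, $\rmd\widetilde{r}_4$ adjusted consistently to that order, and the expansions of $u_{5i-4}(\xi)$ through $O(\xi^3)$; all of these must then combine in the $\xi^3$-coefficient of $\tfrac{\rmd}{\rmd\xi}\log\sigma\big(u-\mathcal{A}(\xi)\big)$. The cross-terms between the $\lambda$-corrections in $\mathcal{A}(\xi)$ and those in $\widetilde{r}_4(\xi)$ are precisely the source of the coefficients $\tfrac{5}{6}\lambda_1$, $\tfrac{1}{3}(\lambda_2-\lambda_1^2)$, and $-\tfrac{1}{2}\lambda_1$ appearing in $\mathcal{R}_{20\mFr+15}$; once these are tracked correctly, the remaining manipulations are a direct transcription of the pattern established in Theorems~\ref{T:C44m1} and~\ref{T:C44m3}.
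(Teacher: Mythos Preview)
Your proposal is correct and follows essentially the same route as the paper's proof: parameterize near infinity, expand the first- and second-kind differentials, adjust $\rmd r_2,\rmd r_3,\rmd r_4$ via~\eqref{rCond} to obtain the associated system, compute the residue identity~\eqref{rExpr} through order $\xi^3$, and differentiate with respect to $x_1$ to recover~\eqref{REqsC55m4}. The paper carries out exactly these steps with the explicit $\xi$-expansions written out, so your sketch matches the intended argument.
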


\subsection{Proof of Theorem~\ref{T:C55m1} ($(5,5\mFr+1)$-Curves)}
We use the following parameterization of \eqref{V55m1Eq}
\begin{equation*}
 x(\xi) = \xi^{-5},\quad y(\xi) = \xi^{-5\mFr-1}\bigg(1 + \frac{\lambda_2}{5} \xi^2 
 + \frac{\lambda_3}{5} \xi^3 + \Big(\frac{\lambda_4}{5}  + \frac{\lambda_2^2}{5^2} \Big)\xi^4 
 + O(\xi^5)\bigg).
\end{equation*}
The basis of differentials of the first kind has the form
\begin{align*}
&\rmd u_{5i-4} = y^3 x^{\mFr-i} \frac{\rmd x}{\partial_y f}  
= \xi^{5i-5}  \Big(1 + \frac{2\lambda_2}{5} \xi^2 + \frac{\lambda_3}{5} \xi^3 + O(\xi^5)\Big) \rmd \xi,\quad i = 1,\, \dots,\, \mFr,\\
&\rmd u_{5i-3}= y^2 x^{2\mFr-i} \frac{\rmd x}{\partial_y f}  
=\xi^{5i-4}  \Big(1 + \frac{\lambda_2}{5} \xi^2 + O(\xi^4)\Big) \rmd \xi,\quad i = 1,\, \dots,\, 2\mFr,\\
&\rmd u_{5i-2}= y x^{3\mFr-i} \frac{\rmd x}{\partial_y f}  
=\xi^{5i-3}  \Big(1 + O(\xi^3)\Big) \rmd \xi,\quad i = 1,\, \dots,\, 3\mFr,\\
&\rmd u_{5i-1}= x^{4\mFr-i} \frac{\rmd x}{\partial_y f}  
=\xi^{5i-2}  \Big(1   + O(\xi^2)\Big) \rmd \xi,
\quad i = 1,\, \dots,\, 4\mFr.
\end{align*}
By integration with respect to $\xi$ we find the integrals of the first kind
\begin{align*}
\mathcal{A}(\xi) = \big(& u_1(\xi),\, u_2(\xi),\, u_3(\xi),\, u_4(\xi),\, \dots,\, 
u_{5\mFr-4}(\xi),\, u_{5\mFr-3}(\xi),\, \\ 
&u_{5\mFr-2}(\xi),\, u_{5\mFr-1}(\xi),\, \dots,\,  u_{10\mFr-3}(\xi),\, u_{10\mFr-2}(\xi),\, 
u_{10\mFr-1}(\xi),\, \dots,\, \\
& u_{15\mFr-2}(\xi),\, u_{15\mFr-1}(\xi),\,
\dots,\, u_{20\mFr-6}(\xi),\, u_{20\mFr-1}(\xi) \big)^t,
\end{align*} 
and then
\begin{align}\label{DA0C55m1}
\mathcal{A}(0) = 0,\quad  \mathcal{A}'(0) = (\delta_{i,1}),\quad  \mathcal{A}''(0) = (\delta_{i,2}),\quad  
\mathcal{A}^{(3)}(0) = \big(2\delta_{i,3} + \tfrac{4}{5}\lambda_2 \delta_{i,1} \big).
\end{align} 

Let the four differentials of the second kind of the smallest Sat\={o} weights be
\begin{align}\label{Diff2DefsS}
&\big(\rmd r_1,\, \rmd r_2,\, \rmd r_3,\, \rmd r_4 \big) = 
\frac{\rmd x}{\partial_y f} 
 \big(x^{4\mFr},\, 2 yx^{3\mFr},\, 3 y^2 x^{2\mFr},\, 4 y^3 x^{\mFr}\big). 
\end{align}
We start from the simplest form of these differentials. 
Then we apply the condition \eqref{rCond} to expansions of
\eqref{Diff2DefsS} near infinity, and find the differentials of the second kind
associated with differentials of the first kind:
\begin{align*}
&\begin{pmatrix} 
\rmd r_1 \\ \rmd r_2 \\ \rmd \widetilde{r}_3 \\ \rmd \widetilde{r}_4
\end{pmatrix} = 
\frac{\rmd x}{\partial_y f} 
\begin{pmatrix} 
x^{4\mFr} \\ 2 yx^{3\mFr} \\ 3 y^2 x^{2\mFr} - \lambda_2 x^{4\mFr} \\
 4 y^3 x^{\mFr} - 2\lambda_2 yx^{3\mFr} - \lambda_3 x^{4\mFr}
 \end{pmatrix}.
\end{align*}
By integration with respect to $\xi$ we obtain
\begin{align*}
& r_1(\xi) = -\xi^{-1}  + O(\xi) + c_1,\\
& r_2(\xi) = - \xi^{-2}  + O(\xi) + c_2,\\
& \widetilde{r}_3(\xi) = - \xi^{-3} + \frac{2\lambda_2}{5} \xi^{-1} + O(\xi) + c_3,\\
& \widetilde{r}_4(\xi) = - \xi^{-4} + \frac{\lambda_2}{5} \xi^{-2}  + \frac{\lambda_3}{5} \xi^{-1} + O(\xi) + c_4.
\end{align*}
Using \eqref{DA0C55m1}, we find
\begin{multline*}
\frac{\rmd}{\rmd \xi} \log \sigma\big(u - \mathcal{A}(\xi)\big) = - \zeta_1(u)
- \big(\zeta_2(u) + \wp_{1,1}(u)\big) \xi \\
- \big(\zeta_3(u) + \tfrac{2}{5}\lambda_2 \zeta_1(u) + \tfrac{3}{2} \wp_{1,2}(u) 
- \tfrac{1}{2} \wp_{1,1,1}(u) \big) \xi^2 
- \big(\zeta_4(u) + \tfrac{1}{5}\lambda_2 \zeta_2(u) + \tfrac{1}{5}\lambda_3 \zeta_1(u) \\
 + \tfrac{1}{2} \wp_{2,2}(u) + \tfrac{4}{3} \wp_{1,3}(u)  + \tfrac{8}{15} \lambda_2 \wp_{1,1}(u) 
 - \wp_{1,1,2}(u) + \tfrac{1}{6} \wp_{1,1,1,1}(u) \big)\xi^3+ O(\xi^4),
\end{multline*}
where $u = \mathcal{A}(D)$ is the Abel's image of a non-special divisor 
$D = \sum_{k=1}^{10\mFr} (x_k,y_k)$. Then we construct four
equations of the form \eqref{rExpr}:
\begin{align}\label{ZetaC55m1}
& \sum_{k=1}^{10\mFr} \begin{pmatrix}
r_1(x_k,\,y_k) \\ r_2(x_k,\,y_k) \\ \widetilde{r}_3(x_k,\,y_k) \\ \widetilde{r}_4(x_k,\,y_k) \\ 
\phantom{\tfrac{4}{3} \big(\big)} 
\end{pmatrix} = - \begin{pmatrix} 
\zeta_1(u) \\
\zeta_2(u) + \wp_{1,1}(u) \\
\zeta_3(u) + \tfrac{3}{2} \wp_{1,2}(u) - \tfrac{1}{2} \wp_{1,1,1}(u) \\
\zeta_4(u)  + \tfrac{1}{2} \wp_{2,2}(u) + \tfrac{4}{3} \wp_{1,3}(u) + \tfrac{1}{3} \lambda_2 \wp_{1,1}(u) \\
 - \wp_{1,1,2}(u) + \tfrac{1}{6} \wp_{1,1,1,1}(u) \end{pmatrix}.
\end{align}
Differentiating all equations with respect to $x_1$, we obtain \eqref{REqsC55m1}. $\qede$

\subsection{Proof of Theorem~\ref{T:C55m2} ($(5,5\mFr+2)$-Curves)}
We use the following parameterization of \eqref{V55m2Eq}
\begin{equation*}
\begin{split}
 x(\xi) &= \xi^{-5},\\ y(\xi) &= \xi^{-5\mFr-2} \bigg(1 + \frac{\lambda_1}{5} \xi 
 + \Big(\frac{\lambda_3}{5} - \frac{\lambda_1^3}{5^3} \Big) \xi^3 
 + \Big(\frac{\lambda_4}{5}  - \frac{\lambda_1 \lambda_3}{5^2}   + 
 \frac{\lambda_1^4}{5^4} \Big)\xi^4 
 + O(\xi^5)\bigg).
 \end{split}
\end{equation*}
The basis of differentials of the first kind has the form
\begin{align*}
&\rmd u_{5i-3} = y^3 x^{\mFr-i} \frac{\rmd x}{\partial_y f}  
= \xi^{5i-4}  \Big(1 + \frac{\lambda_1}{5} \xi - \frac{3 \lambda_1^2}{5^2} \xi^2  + O(\xi^4)\Big) \rmd \xi,
\quad i = 1,\, \dots,\, \mFr,\\
&\rmd u_{5i-1}= y^2 x^{2\mFr-i} \frac{\rmd x}{\partial_y f}  
=\xi^{5i-2}  \Big(1  + O(\xi^2)\Big) \rmd \xi,\quad i = 1,\, \dots,\, 2\mFr,\\
&\rmd u_{5i-4}= y x^{3\mFr+1-i} \frac{\rmd x}{\partial_y f}  
=\xi^{5i-5}  \Big(1 - \frac{\lambda_1}{5} \xi - \frac{2 \lambda_1^2}{5^2} \xi^2 
- \Big(\frac{2\lambda_3}{5} - \frac{7\lambda_1^3}{5^3}\Big) \xi^3+ O(\xi^5)\Big) \rmd \xi, \\
&\qquad  i = 1,\, \dots,\, 3\mFr+1,\\
&\rmd u_{5i-2}= x^{4\mFr+1-i} \frac{\rmd x}{\partial_y f}  
=\xi^{5i-3}  \Big(1 - \frac{2\lambda_1}{5} \xi  + O(\xi^3)\Big) \rmd \xi,
\quad  i = 1,\, \dots,\, 4\mFr+1.
\end{align*}
By integration with respect to $\xi$ we find the integrals of the first kind
\begin{align*}
\mathcal{A}(\xi) = \big(& u_1(\xi),\, u_2(\xi),\, u_3(\xi),\, u_4(\xi),\, \dots,\, 
u_{5\mFr-3}(\xi),\, u_{5\mFr-2}(\xi),\, \\ 
&u_{5\mFr-1}(\xi),\, u_{5\mFr+1}(\xi),\, \dots,\,  u_{10\mFr-1}(\xi),\, u_{10\mFr+1}(\xi),\, 
u_{10\mFr+3}(\xi),\, \dots,\, \\
& u_{15\mFr+1}(\xi),\, u_{15\mFr+3}(\xi),\,
\dots,\, u_{20\mFr-2}(\xi),\, u_{20\mFr+3}(\xi) \big)^t,
\end{align*} 
and then
\begin{gather}\label{DA0C55m2}
\begin{split}
&\mathcal{A}(0) = 0,\quad  \mathcal{A}'(0) = (\delta_{i,1}),\quad  
\mathcal{A}''(0) = (\delta_{i,2} -  \tfrac{1}{5} \lambda_1 \delta_{i,1}),\\  
&\mathcal{A}^{(3)}(0) = \big( 2\delta_{i,3} + \tfrac{2}{5}\lambda_1 \delta_{i,2} - (\tfrac{2}{5})^2 \lambda_1^2 \delta_{i,1}   \big).
\end{split}
\end{gather} 

Let the four differentials of the second kind of weights $1$, $2$, $3$, $4$ be:
\begin{align*}
&\big(\rmd r_1,\, \rmd r_2,\, \rmd r_3,\, \rmd r_4 \big) = 
\frac{\rmd x}{\partial_y f} 
 \big(y^2 x^{2\mFr},\, 2 x^{4\mFr+1},\, 3 y^3 x^{\mFr},\, 4 y x^{3\mFr+1}\big).
\end{align*}
With the help of \eqref{rCond} we find the differentials of the second kind associated with 
differentials of the first kind:
\begin{align*}
&\begin{pmatrix} 
\rmd r_1 \\ \rmd \widetilde{r}_2 \\ \rmd \widetilde{r}_3 \\ \rmd \widetilde{r}_4
\end{pmatrix} = 
\frac{\rmd x}{\partial_y f} 
\begin{pmatrix} 
y^2 x^{2\mFr} \\ 
2 x^{4\mFr+1} + \lambda_1 y^2 x^{2\mFr} \\ 
3 y^3 x^{\mFr} - \lambda_1 x^{4\mFr+1} \\
4 y x^{3\mFr+1} + 2 \lambda_1 y^3 x^{\mFr} + 2 \lambda_3 y^2 x^{2\mFr}
 \end{pmatrix}.
\end{align*}
By integration with respect to $\xi$ we obtain
\begin{align*}
& r_1(\xi) = -\xi^{-1}  + O(\xi) + c_1,\\
& \widetilde{r}_2(\xi) = - \xi^{-2} - \frac{\lambda_1}{5} \xi^{-1} + O(\xi) + c_2,\\
& \widetilde{r}_3(\xi) = - \xi^{-3} + \frac{\lambda_1}{5} \xi^{-2} - \frac{\lambda_1^2}{5^2} \xi^{-1} + O(\xi) + c_3,\\
& \widetilde{r}_4(\xi) = - \xi^{-4} - \frac{2\lambda_1}{15} \xi^{-3} - \frac{\lambda_1^2}{5^2} \xi^{-2}  
- 2\Big(\frac{\lambda_3}{5}  - \frac{\lambda_1^3}{5^3} \Big) \xi^{-1} + O(\xi) + c_4.
\end{align*}
Using \eqref{DA0C55m2}, we find
\begin{multline*}
\frac{\rmd}{\rmd \xi} \log \sigma\big(u - \mathcal{A}(\xi)\big) = - \zeta_1(u)
- \big(\zeta_2(u) - \tfrac{1}{5}\lambda_1 \zeta_1(u) + \wp_{1,1}(u)\big) \xi \\
- \big(\zeta_3(u) + \tfrac{1}{5}\lambda_1 \zeta_2(u) - \tfrac{2}{5^2}\lambda_1^2 \zeta_1(u) 
+ \tfrac{3}{2} \wp_{1,2}(u) - \tfrac{3}{10} \lambda_1  \wp_{1,1}(u) 
- \tfrac{1}{2} \wp_{1,1,1}(u) \big) \xi^2 \\
- \big(\zeta_4(u) - \tfrac{2}{5}\lambda_1 \zeta_3(u) - \tfrac{3}{5^2}\lambda_1^2 \zeta_2(u) 
- (\tfrac{2}{5}\lambda_3 - \tfrac{7}{5^3} \lambda_1^3) \zeta_1(u) 
 + \tfrac{1}{2} \wp_{2,2}(u) + \tfrac{4}{3} \wp_{1,3}(u) \\
 + \tfrac{1}{15} \lambda_1 \wp_{1,2}(u) - \tfrac{13}{6\cdot 5^2} \lambda_1^2 \wp_{1,1}(u) 
 - \wp_{1,1,2}(u) + \tfrac{1}{5} \lambda_1 \wp_{1,1,1}(u) + \tfrac{1}{6} \wp_{1,1,1,1}(u) \big)\xi^3+ O(\xi^4),
\end{multline*}
where $u = \mathcal{A}(D)$ is the Abel's image of a non-special divisor 
$D = \sum_{k=1}^{10\mFr+2} (x_k,y_k)$. Then we construct four
equations of the form \eqref{rExpr}:
\begin{align}\label{ZetaC55m2}
& \sum_{k=1}^{10\mFr+2} \begin{pmatrix}
r_1(x_k,\,y_k) \\ \widetilde{r}_2(x_k,\,y_k) \\ 
\widetilde{r}_3(x_k,\,y_k) \\ 
\widetilde{r}_4(x_k,\,y_k) \\ 
\phantom{\tfrac{4}{3} \big(\big)} 
\end{pmatrix} = - \begin{pmatrix} 
\zeta_1(u) \\
\zeta_2(u) + \wp_{1,1}(u) \\
\zeta_3(u) + \tfrac{3}{2} \wp_{1,2}(u) - \tfrac{1}{2}\lambda_1 \wp_{1,1}(u) 
-  \tfrac{1}{2} \wp_{1,1,1}(u)\\
\zeta_4(u) + \tfrac{1}{2} \wp_{2,2}(u) + \tfrac{4}{3} \wp_{1,3}(u) 
+ \tfrac{2}{3} \lambda_1 \wp_{1,2}(u)
 \\  - \tfrac{1}{6} \lambda_1^2 \wp_{1,1}(u) - \wp_{1,1,2}(u) 
+ \tfrac{1}{6} \wp_{1,1,1,1}(u) \end{pmatrix}.
\end{align}
Differentiating all equations with respect to $x_1$, we obtain \eqref{REqsC55m2}. 
$\qede$

\subsection{Proof of Theorem~\ref{T:C55m3} ($(5,5\mFr+3)$-Curves)}
We use the following parameterization of \eqref{V55m3Eq}
\begin{equation*}
\begin{split}
 x(\xi) &= \xi^{-5},\\ y(\xi) &= \xi^{-5\mFr-3} \bigg(1 + \frac{\lambda_1}{5} \xi 
 + \Big(\frac{\lambda_2}{5} + \frac{\lambda_1^2}{5^2} \Big) \xi^2 \\
&\qquad\qquad\quad  + \Big(\frac{\lambda_4}{5}  - \frac{\lambda_2^2}{5^2} - \frac{3}{5^3} \lambda_1^2 \lambda_2 
 - \frac{2}{5^4} \lambda_1^4 \Big)\xi^4 
 + O(\xi^5)\bigg).
 \end{split}
\end{equation*}
The basis of differentials of the first kind has the form
\begin{align*}
&\rmd u_{5i-2} = y^3 x^{\mFr-i} \frac{\rmd x}{\partial_y f}  
= \xi^{5i-3}  \Big(1 + \frac{2\lambda_1}{5} \xi  + O(\xi^3)\Big) \rmd \xi,
\quad i = 1,\, \dots,\, \mFr,\\
&\rmd u_{5i-4}= y^2 x^{2\mFr+1-i} \frac{\rmd x}{\partial_y f}  
=\xi^{5i-5}  \Big(1 +  \frac{\lambda_1}{5} \xi 
- \Big(\frac{\lambda_2}{5} + \frac{2\lambda_1^2}{5^2} \Big) \xi^2  \\ 
&\qquad\qquad 
- \Big(\frac{6\lambda_1 \lambda_2}{5^2} + \frac{7\lambda_1^3}{5^3}\Big) \xi^3 
+ O(\xi^5)\Big) \rmd \xi,\quad i = 1,\, \dots,\, 2\mFr+1,\\
&\rmd u_{5i-1} = y x^{3\mFr+1-i} \frac{\rmd x}{\partial_y f}  
=\xi^{5i-2}  \Big(1 + O(\xi^2)\Big) \rmd \xi,\quad  i = 1,\, \dots,\, 3\mFr+1,\\
&\rmd u_{5i-3}= x^{4\mFr+2-i} \frac{\rmd x}{\partial_y f}  
=\xi^{5i-4}  \Big(1 - \frac{\lambda_1}{5} \xi  
- 3\Big(\frac{\lambda_2}{5} + \frac{\lambda_1^2}{5^2}\Big) \xi^2 + O(\xi^4)\Big) \rmd \xi, \\ 
&\qquad  i = 1,\, \dots,\, 4\mFr+2.
\end{align*}
By integration with respect to $\xi$ we find the integrals of the first kind
\begin{align*}
\mathcal{A}(\xi) = \big(& u_1(\xi),\, u_2(\xi),\, u_3(\xi),\, u_4(\xi),\, \dots,\, 
u_{5\mFr-2}(\xi),\, u_{5\mFr-1}(\xi),\, \\ 
&u_{5\mFr+1}(\xi),\, u_{5\mFr+2}(\xi),\, \dots,\,  
u_{10\mFr+1}(\xi),\, u_{10\mFr+2}(\xi),\, u_{10\mFr+4}(\xi),\, \dots,\, \\
& u_{15\mFr+4}(\xi),\, u_{15\mFr+7}(\xi),\,
\dots,\, u_{20\mFr+2}(\xi),\, u_{20\mFr+7}(\xi) \big)^t,
\end{align*} 
and then
\begin{gather}\label{DA0C55m3}
\begin{split}
&\mathcal{A}(0) = 0,\quad  \mathcal{A}'(0) = (\delta_{i,1}),\quad  
\mathcal{A}''(0) = ( \delta_{i,2} + \tfrac{1}{5} \lambda_1 \delta_{i,1}),\\  
&\mathcal{A}^{(3)}(0) = \big( 2\delta_{i,3} - \tfrac{2}{5}\lambda_1 \delta_{i,2}
- \big( \tfrac{2}{5}\lambda_2 + (\tfrac{2}{5})^2 \lambda_1^2 \big) \delta_{i,1}  \big).
\end{split}
\end{gather} 
Let the four differentials of the second kind of weights $1$, $2$, $3$, $4$ be:
\begin{align*}
&\big(\rmd r_1,\, \rmd r_2,\, \rmd r_3,\, \rmd r_4 \big) = 
\frac{\rmd x}{\partial_y f} 
 \big(y x^{3\mFr+1},\, 2 y^3 x^{\mFr},\, 3 x^{4\mFr+2},\, 4 y^2 x^{2\mFr+1}\big). 
\end{align*}
With the help of \eqref{rCond} we find the differentials of the second kind associated with 
differentials of the first kind:
\begin{align*}
&\begin{pmatrix} 
\rmd r_1 \\ \rmd \widetilde{r}_2 \\ \rmd \widetilde{r}_3 \\ \rmd \widetilde{r}_4
\end{pmatrix} = 
\frac{\rmd x}{\partial_y f} 
\begin{pmatrix} 
y x^{3\mFr+1}\\ 
2 y^3 x^{\mFr} - \lambda_1 y x^{3\mFr+1} \\ 
3 x^{4\mFr+2} + \lambda_1 y^3 x^{\mFr} + 2\lambda_2 y x^{3\mFr+1} \\
4 y^2 x^{2\mFr+1} - 2 \lambda_1 x^{4\mFr+2} + 2 \lambda_2 y^3 x^{\mFr}
- \lambda_1 \lambda_2 y x^{3\mFr+1}
 \end{pmatrix}.
\end{align*}
By integration with respect to $\xi$ we obtain
\begin{align*}
& r_1(\xi) = -\xi^{-1}  + O(\xi) + c_1,\\
& \widetilde{r}_2(\xi) = - \xi^{-2} + \frac{\lambda_1}{5} \xi^{-1} + O(\xi) + c_2,\\
& \widetilde{r}_3(\xi) = - \xi^{-3} - \frac{\lambda_1}{5} \xi^{-2} - \Big( \frac{\lambda_2}{5} 
+ \frac{\lambda_1^2}{5^2} \Big) \xi^{-1} + O(\xi) + c_3,\\
& \widetilde{r}_4(\xi) = - \xi^{-4} + \frac{2  \lambda_1}{5} \xi^{-3} 
- \Big( \frac{3\lambda_2}{5}  + \frac{\lambda_1^2 }{5^2} \Big) \xi^{-2} 
- \Big(\frac{\lambda_2 \lambda_1}{5^2}  + \frac{2\lambda_1^3}{5^3} \Big) \xi^{-1} + O(\xi) + c_4.
\end{align*}
Using \eqref{DA0C55m3}, we find
\begin{multline*}
\frac{\rmd}{\rmd \xi} \log \sigma\big(u - \mathcal{A}(\xi)\big) = - \zeta_1(u)
- \big(\zeta_2(u) + \tfrac{1}{5}\lambda_1 \zeta_1(u) + \wp_{1,1}(u)\big) \xi \\
- \big(\zeta_3(u) - \tfrac{1}{5}\lambda_1 \zeta_2(u) - (\tfrac{1}{5}\lambda_2 + \tfrac{2}{5^2}\lambda_1^2) \zeta_1(u) 
+ \tfrac{3}{2} \wp_{1,2}(u) + \tfrac{3}{10} \lambda_1  \wp_{1,1}(u) 
- \tfrac{1}{2} \wp_{1,1,1}(u) \big) \xi^2 \\
- \big(\zeta_4(u) + \tfrac{2}{5}\lambda_1 \zeta_3(u) - 3(\tfrac{1}{5} \lambda_2 + \tfrac{1}{5^2}\lambda_1^2) \zeta_2(u) 
- (\tfrac{6}{5^2}\lambda_2 \lambda_1 + \tfrac{7}{5^3} \lambda_1^3) \zeta_1(u) \\
 + \tfrac{1}{2} \wp_{2,2}(u) + \tfrac{4}{3} \wp_{1,3}(u) 
 - \tfrac{1}{15} \lambda_1 \wp_{1,2}(u) - \tfrac{1}{3}(\tfrac{4}{5}\lambda_2 + \tfrac{13}{2\cdot 5^2} \lambda_1^2) \wp_{1,1}(u) \\
 - \wp_{1,1,2}(u) - \tfrac{1}{5} \lambda_1 \wp_{1,1,1}(u) + \tfrac{1}{6} \wp_{1,1,1,1}(u) \big)\xi^3+ O(\xi^4),
\end{multline*}
where $u = \mathcal{A}(D)$ is the Abel's image of a non-special divisor 
$D = \sum_{k=1}^{10\mFr+4} (x_k,y_k)$. Then we construct four
equations of the form \eqref{rExpr}:
\begin{align}\label{ZetaC55m3}
& \sum_{k=1}^{10\mFr+4} \begin{pmatrix}
r_1(x_k,\,y_k) \\ \widetilde{r}_2(x_k,\,y_k) \\ 
\widetilde{r}_3(x_k,\,y_k) \\
\widetilde{r}_4(x_k,\,y_k) \\ 
\phantom{\tfrac{4}{3} \big(\big)} 
\end{pmatrix} = - \begin{pmatrix} 
\zeta_1(u) \\
\zeta_2(u) + \wp_{1,1}(u) \\
\zeta_3(u) + \tfrac{3}{2} \wp_{1,2}(u) + \tfrac{1}{2} \lambda_1 \wp_{1,1}(u)  - \tfrac{1}{2} \wp_{1,1,1}(u) \\
\zeta_4(u)  + \tfrac{1}{2} \wp_{2,2}(u) + \tfrac{4}{3} \wp_{1,3}(u)  - \tfrac{2}{3} \lambda_1 \wp_{1,2}(u) \\
+  \tfrac{1}{3} (\lambda_2 - \tfrac{1}{2} \lambda_1^2 ) \wp_{1,1}(u) 
- \wp_{1,1,2}(u) + \tfrac{1}{6} \wp_{1,1,1,1}(u) \end{pmatrix}.
\end{align}
Differentiating all equations with respect to $x_1$, we obtain \eqref{REqsC55m3}. 
$\qede$

\subsection{Proof of Theorem~\ref{T:C55m4} ($(5,5\mFr+4)$-Curves)}
We use the following parameterization of \eqref{V55m4Eq}
\begin{equation*}
\begin{split}
 x(\xi) &= \xi^{-5},\\ y(\xi) &= \xi^{-5\mFr-4} \bigg(1 + \frac{\lambda_1}{5} \xi 
 + \Big(\frac{\lambda_2}{5} - \frac{\lambda_1^2}{5^2} \Big) \xi^2 
 + \Big(\frac{\lambda_3}{5}  - \frac{\lambda_2\lambda_1}{5^2} + \frac{\lambda_1^3}{5^3} \Big)\xi^3 
 + O(\xi^5)\bigg).
 \end{split}
\end{equation*}
The basis of differentials of the first kind has the form
\begin{align*}
&\rmd u_{5i-1} = y^3 x^{\mFr-i} \frac{\rmd x}{\partial_y f}  
= \xi^{5i-2}  \Big(1 + O(\xi^2)\Big) \rmd \xi,\quad  i = 1,\, \dots,\, \mFr,\\
&\rmd u_{5i-2}= y^2 x^{2\mFr+1-i} \frac{\rmd x}{\partial_y f}  
=\xi^{5i-3}  \Big(1 -  \frac{\lambda_1}{5} \xi  + O(\xi^3)\Big) \rmd \xi,\quad i = 1,\, \dots,\, 2\mFr+1,\\
&\rmd u_{5i-3} = y x^{3\mFr+2-i} \frac{\rmd x}{\partial_y f}  
=\xi^{5i-4}  \Big(1 - \frac{2\lambda_1}{5} \xi 
- \Big(\frac{\lambda_2}{5} - \frac{3 \lambda_1^2}{5^2} \Big)\xi^2 + O(\xi^4)\Big) \rmd \xi, \\
&\qquad   i = 1,\, \dots,\, 3\mFr+2,\\
&\rmd u_{5i-4}= x^{4\mFr+3-i} \frac{\rmd x}{\partial_y f}  
=\xi^{5i-5}  \Big(1 - \frac{3\lambda_1}{5} \xi  
- \Big(\frac{2\lambda_2}{5} - \frac{7\lambda_1^2}{5^2} \Big) \xi^2 \\
&\qquad\qquad - \Big(\frac{\lambda_3}{5} - \frac{6\lambda_1\lambda_2 }{5^2}+ \frac{11 \lambda_1^3}{5^3} \Big) \xi^3+ O(\xi^5)\Big) \rmd \xi, \quad i = 1,\, \dots,\, 4\mFr+3.
\end{align*}
By integration over $\xi$ we find the integrals of the first kind
\begin{align*}
\mathcal{A}(\xi) = \big(& u_1(\xi),\, u_2(\xi),\, u_3(\xi),\, u_4(\xi),\, \dots,\, 
u_{5\mFr-1}(\xi),\, u_{5\mFr+1}(\xi),\, \\ 
&u_{5\mFr+2}(\xi),\, u_{5\mFr+3}(\xi),\, \dots,\,  u_{10\mFr+3}(\xi),\, u_{10\mFr+6}(\xi),\, 
u_{10\mFr+7}(\xi),\, \dots,\, \\
& u_{15\mFr+7}(\xi),\, u_{15\mFr+11}(\xi),\,
\dots,\, u_{20\mFr+6}(\xi),\, u_{20\mFr+11}(\xi) \big)^t,
\end{align*} 
and then
\begin{gather}\label{DA0C55m3}
\begin{split}
&\mathcal{A}(0) = 0,\quad  \mathcal{A}'(0) = (\delta_{i,1}),\quad  
\mathcal{A}''(0) = (\delta_{i,2} - \tfrac{3}{5} \lambda_1 \delta_{i,1}),\\  
&\mathcal{A}^{(3)}(0) = \big(2\delta_{i,3} - \tfrac{4}{5}\lambda_1 \delta_{i,2}
- \big( \tfrac{4}{5}\lambda_2 - \tfrac{14}{5^2} \lambda_1^2 \big) \delta_{i,1}  \big).
\end{split}
\end{gather} 
Let the four differentials of the second kind of weights $1$, $2$, $3$, $4$ be:
\begin{align*}
&\big(\rmd r_1,\, \rmd r_2,\, \rmd r_3,\, \rmd r_4 \big) = 
\frac{\rmd x}{\partial_y f} 
 \big(y^3 x^{\mFr},\, 2 y^2 x^{2\mFr+1},\, 3 y x^{3\mFr+2},\, 4 x^{4\mFr+3}\big). 
\end{align*}
With the help of \eqref{rCond} we find the differentials of the second kind associated with 
differentials of the first kind:
\begin{align*}
&\begin{pmatrix} 
\rmd r_1 \\ \rmd \widetilde{r}_2 \\ \rmd \widetilde{r}_3 \\ \rmd \widetilde{r}_4
\end{pmatrix} = 
\frac{\rmd x}{\partial_y f} 
\begin{pmatrix} 
y^3 x^{\mFr} \\ 
2 y^2 x^{2\mFr+1} + \lambda_1 y^3 x^{\mFr} \\ 
3 y x^{3\mFr+2}+ 2 \lambda_1 y^2 x^{2\mFr+1} + \lambda_2 y^3 x^{\mFr} \\
4 x^{4\mFr+3} + 3 \lambda_1 y x^{3\mFr+2} + 2 \lambda_2 y^2 x^{2\mFr+1} + \lambda_3 y^3 x^{\mFr}
 \end{pmatrix}.
\end{align*}
By integration with respect to $\xi$ we obtain
\begin{align*}
& r_1(\xi) = -\xi^{-1}  + O(\xi) + c_1,\\
& \widetilde{r}_2(\xi) = - \xi^{-2} - \frac{3\lambda_1}{5} \xi^{-1} + O(\xi) + c_2,\\
& \widetilde{r}_3(\xi) = - \xi^{-3} - \frac{2\lambda_1}{5} \xi^{-2} - \Big( \frac{2\lambda_2}{5} 
- \frac{\lambda_1^2}{5^2} \Big) \xi^{-1} + O(\xi) + c_3,\\
& \widetilde{r}_4(\xi) = - \xi^{-4} - \frac{\lambda_1}{5}  \xi^{-3} 
- \Big( \frac{\lambda_2}{5} - \frac{\lambda_1^2}{5^2} \Big) \xi^{-2}  
- \Big(\frac{\lambda_3}{5} - \frac{\lambda_2 \lambda_1}{5^2}  
+ \frac{\lambda_1^3}{5^3} \big) \xi^{-1} + O(\xi) + c_4.
\end{align*}
Using \eqref{DA0C55m3}, we find
\begin{multline*}
\frac{\rmd}{\rmd \xi} \log \sigma\big(u - \mathcal{A}(\xi)\big) = - \zeta_1(u)
- \big(\zeta_2(u) - \tfrac{3}{5}\lambda_1 \zeta_1(u) + \wp_{1,1}(u)\big) \xi \\
- \big(\zeta_3(u) - \tfrac{2}{5}\lambda_1 \zeta_2(u) - (\tfrac{2}{5}\lambda_2 - \tfrac{7}{5^2}\lambda_1^2) \zeta_1(u) 
+ \tfrac{3}{2} \wp_{1,2}(u) - \tfrac{9}{10} \lambda_1  \wp_{1,1}(u) 
- \tfrac{1}{2} \wp_{1,1,1}(u) \big) \xi^2 \\
- \big(\zeta_4(u) - \tfrac{1}{5}\lambda_1 \zeta_3(u) - (\tfrac{1}{5} \lambda_2 - \tfrac{3}{5^2}\lambda_1^2) \zeta_2(u) 
- (\tfrac{1}{5} \lambda_3 - \tfrac{6}{5^2}\lambda_2 \lambda_1 + \tfrac{11}{5^3} \lambda_1^3) \zeta_1(u) \\
 + \tfrac{1}{2} \wp_{2,2}(u) + \tfrac{4}{3} \wp_{1,3}(u) 
 - \tfrac{17}{15} \lambda_1 \wp_{1,2}(u) - \tfrac{1}{3} (\tfrac{8}{5}\lambda_2 - \tfrac{83}{2\cdot 5^2} \lambda_1^2) \wp_{1,1}(u) \\
 - \wp_{1,1,2}(u) + \tfrac{3}{5} \lambda_1 \wp_{1,1,1}(u) + \tfrac{1}{6} \wp_{1,1,1,1}(u) \big)\xi^3+ O(\xi^4),
\end{multline*}
where $u = \mathcal{A}(D)$ is the Abel's image of a non-special divisor 
$D = \sum_{k=1}^{10\mFr+6} (x_k,y_k)$. Then we construct four
equations of the form \eqref{rExpr}:
\begin{align}\label{ZetaC55m4}
& \sum_{k=1}^{10\mFr+6} \begin{pmatrix}
r_1(x_k,\,y_k) \\ \widetilde{r}_2(x_k,\,y_k) \\ 
\widetilde{r}_3(x_k,\,y_k) \\ 
\widetilde{r}_4(x_k,\,y_k) \\ 
\phantom{\tfrac{4}{3} \big(\big)} \\ 
\phantom{\tfrac{4}{3} \big(\big)}
\end{pmatrix} = - \begin{pmatrix} 
\zeta_1(u) \\
\zeta_2(u)  + \wp_{1,1}(u) \\
\zeta_3(u) + \tfrac{3}{2} \wp_{1,2}(u) - \tfrac{1}{2} \lambda_1 \wp_{1,1}(u) - \tfrac{1}{2} \wp_{1,1,1}(u) \\
\zeta_4(u) + \tfrac{1}{2} \wp_{2,2}(u) + \tfrac{4}{3} \wp_{1,3}(u) - \tfrac{5}{6} \lambda_1 \wp_{1,2}(u) \\
- \tfrac{1}{3}(\lambda_2 - \lambda_1^2) \wp_{1,1}(u) \\  - \wp_{1,1,2}(u) + \tfrac{1}{2} \lambda_1 \wp_{1,1,1}(u)
+ \tfrac{1}{6} \wp_{1,1,1,1}(u) \end{pmatrix}.
\end{align}
Differentiating all equations with respect to $x_1$, we obtain \eqref{REqsC55m4}. 
$\qede$

\section{Conclusions}
The Jacobi inversion problem is solved in general by Theorems~\ref{T1} and \ref{T2},
where an $(n,s)$-curve is supposed to be non-hyperelliptic. However, the same approach works for
hyperelliptic curves, as explained in Example~\ref{E:HypC}.
Theorems~\ref{T:C33m1} and \ref{T:C33m2} give solutions on trigonal curves
of the types $(3,3\mFr+1)$ and $(3,3\mFr+2)$, $\mFr$
is a natural number. Theorems~\ref{T:C44m1} and \ref{T:C44m3} 
give solutions on tetragonal curves of the types $(4,4\mFr+1)$ and $(4,4\mFr+3)$.
Theorems~\ref{T:C55m1}, \ref{T:C55m2}, \ref{T:C55m3}, and \ref{T:C55m4}
 give solutions on pentagonal curves of the types $(5,5\mFr+1)$, $(5,5\mFr+2)$, $(5,5\mFr+3)$, and 
 $(5,5\mFr+4)$.
 The Jacobi inversion problem is solved in terms of multiply periodic $\wp$ functions
defined by \eqref{wp2Def}, \eqref{wp3Def} from the multivariable sigma function $\sigma$
of a curve under consideration.


\end{document}